\documentclass{article}
\usepackage{times}
\usepackage{soul}
\usepackage{url}
\usepackage[hidelinks]{hyperref}
\usepackage[utf8]{inputenc}
\usepackage[small]{caption}
\usepackage{graphicx}
\usepackage{amsmath}
\usepackage{amsthm}
\usepackage{booktabs}
\usepackage{algorithm}
\usepackage{algorithmic}
\usepackage[switch]{lineno}

\usepackage{authblk}
\usepackage{fullpage}
\usepackage{enumerate}

\title{Learning Tree Automata with Term Rewriting}

\author{Jakub Kopystiański}
\author{Jan Otop}
\affil{University of Wrocław}
\date{}

\usepackage{amsmath, amsthm}
\usepackage{todonotes}
\presetkeys{todonotes}{inline,backgroundcolor=green}{}
\usepackage{amssymb}
\usepackage{xspace}

\newtheorem{theorem}{Theorem}
\newtheorem{lemma}[theorem]{Lemma}
\newtheorem{definition}[theorem]{Definition}
\newtheorem{fact}[theorem]{Fact}

\newtheorem{proposition}[theorem]{Proposition}
\newtheorem{remark}[theorem]{Remark}
\newtheorem{corollary}[theorem]{Corollary}

\newcommand{\N}{{N}}
\newcommand{\set}[1]{\{#1\}}

\newcommand{\arity}{\textsf{Ar}}
\newcommand{\dom}{\textrm{dom}}
\newcommand{\countAut}[2]{\#_{#1}[#2]}
\newcommand{\countAutCumm}[2]{\Sigma\#_{#1}[#2]}

\newcommand{\coNP}{\textsf{coNP}}
\newcommand{\NP}{\textsf{NP}}
\newcommand{\PTime}{\textsf{P}}

\newcommand{\rewritesOneStep}{\rightarrow}
\newcommand{\rewrites}{\rewritesOneStep^*}
\newcommand{\normalFormTRS}[2]{\textbf{NF}_{#1}(#2)}
\newcommand{\rew}{\mathcal{R}}

\newcommand{\subst}{\sigma}
\newcommand{\substExt}{\widehat{\subst}}

\newcommand{\Sig}{\mathcal{F}}
\newcommand{\trees}{\mathcal{T}}
\newcommand{\aut}{\mathcal{A}}
\newcommand{\autB}{\mathcal{B}}
\newcommand{\lang}{\mathcal{L}}

\newcommand{\Qfin}{Q_{\textrm{acc}}}

\newcommand{\deltaFinal}{\widehat{\delta}}
\newcommand{\deltaExt}[1]{\vec{\delta}_{#1}}

\newcommand{\congL}{\sim_{\lang}}

\newcommand{\lStar}{L\ensuremath{^*}}
\newcommand{\Vars}{\mathcal{X}}

\newcommand{\targetLang}{\mathcal{U}}

\newcommand{\subsumedBy}{\leq_{\aut}}
\newcommand{\subsumedByStrict}{<_{\aut}}
\newcommand{\subsumedByFunc}{\preccurlyeq_{\aut}}
\newcommand{\succBy}{\textsf{succ}_{\aut}}

\newcommand{\Iff}{\leftrightarrow}

\newcommand{\DLand}{\sqcap}
\newcommand{\DLor}{\sqcup}

\newcommand{\Lwords}{\mathcal{K}}
\newcommand{\treeLangFromWords}[1]{\mathcal{T}^{#1}}
\newcommand{\SigUnary}{\Sig_{\textrm{unary}}}

\begin{document}

\maketitle

\begin{abstract}
We present an extension of the Angluin-style learning algorithm for tree automata that incorporates deductive inference. 
The learning algorithm is provided with a term rewriting system that specifies properties of the target tree language (e.g., the order of subtrees under a symbol f is irrelevant). 
This term rewriting system is used to infer answers to some queries, which reduces the query complexity of the learning algorithm. 
We present examples of rewrite systems that express natural properties of tree-structured data, which yield a significant reduction in the number of queries.

 \end{abstract}

\section{Introduction}
\paragraph{Active automata learning.}
The aim of \emph{active automata learning} is to generate a \emph{deterministic finite automaton (DFA)} recognizing an unknown regular language 
$\lang$ (target language) having access to an oracle answering queries about $\lang$, i.e.,
the algorithm is allowed to \emph{actively} query about the language in contrast to \emph{passive} learning where
the input dataset does not change. 
There are two types of queries: 
\emph{membership} of a~given word in $\lang$, and 
\emph{equivalence} of the language of a given DFA $\aut$ with the target language $\lang$, where the answer to an equivalence query is either YES, 
or a \emph{counterexample} word distinguishing $\lang$ and $\lang(\aut)$.
This framework was proposed in the seminal paper~\cite{angluin1987learning} along with the \lStar-algorithm, which enables learning in polynomial time. 
Active automata learning has been intensively studied
for both efficiency~\cite{TTTalgorithm,Lsharp} as well as extensions to other automata-based models. 
In particular, the \lStar-algorithm has been adapted to tree automata~\cite{DBLP:conf/dlt/DrewesH03} and
multiplicity tree automata~\cite{DBLP:conf/icgi/HabrardO06,MarusicW15}.

\paragraph{Applications of active tree-automata learning.}
Trees are a natural model for structured data and hence tree automata are often used in knowledge representation, including processing of XML, JSON, and 
unification in Description Logics~\cite{BaaderTrees}.
Motivated by these applications, active learning of tree languages and related formalisms has been extensively studied and applied to 
information extraction~\cite{DBLP:conf/ijcai/KosalaBBB03}, 
the synthesis of node specifications~\cite{DBLP:conf/icdt/GrienenbergerR19}, 
structured data transformations~\cite{DBLP:conf/pldi/YaghmazadehKDC16}, 
and Description Logic $\mathcal{EL}$ terminologies~\cite{learningEL}.
Due to the close link between regular tree and context-free languages, active learning of tree languages has been applied to 
context-free grammar repair~\cite{DBLP:conf/sle/Raselimo021},
learning probabilistic grammars~\cite{DBLP:conf/aaai/NitayFZ21}, and
verification and explainability of RNNs~\cite{barbot2021extracting}.

\paragraph{Query complexity.}
The number of queries required by the \lStar-algorithm is polynomial in the size of the target automaton.
While it is feasible to answer that many queries automatically, the number remains too high for manual intervention.
Consequently, there is a large body of work on improving query complexity of 
active automata learning~\cite{TTTalgorithm,ADTalgorithm,Lsharp,KrugerJR24,DierlFHJST24} as well as implementing the teacher
using Large Language Models~\cite{learningEL,bhattamishraautomata,vazquezchanlatte2025ll}.
However, LLMs only answer membership queries, and their equivalence query answers are merely approximations with a large number of 
membership queries.
This approximation approach is similar to prior work on automating query answering~\cite{ModelLearning}.
Therefore, in this work, we follow an alternative route, proposed in~\cite{ecai25}, which is to restrict the search space of automata by
providing the learning algorithm with additional information (called advice) regarding the target language.  
Furthermore, we focus on inferring the answers to equivalence queries as 
it yields the greatest reduction in complexity and
increases reliability of learning (answers to these queries are often approximated).
We briefly discuss inference of answers to membership queries.

\paragraph{Learning with advice.}
We consider an extended learning framework, in which the learning algorithm is given \emph{advice} about the target language~\cite{ecai25}. 
This advice constrains the search space and allows the algorithm to infer answers to certain queries. 
This approach bridges two synthesis paradigms: 
deductive synthesis from specifications and inductive synthesis based on queries.
Furthermore, it allows for flexibility to express the learned language partially with queries and partially 
with a \emph{term rewriting system (TRS)}, which substantially differs from automata,
and hence some properties can be succinctly expressed with a TRS instead of
multiple queries.

\paragraph{Term rewriting as advice.}
A Term Rewriting System (TRS) consists of rewrite rules $l \rightarrow r$, where $l,r$ are terms (or equivalently trees with variables).
\emph{Rewriting} is the process of iteratively transforming a tree by replacing an instance of a left-hand side $l$ with its corresponding right-hand side $r$.
For example: 
\begin{itemize}
\item Commutativity: The rule $f(X,Y) \to f(Y,X)$ allows for the swapping of subtrees at node $f$.
\item Associativity: With  $f(f(X,Y),Z) \to f(X, f(Y,Z))$ any tree with $f$ symbols and constants can be transformed 
into a right skewed tree.
\end{itemize}
These rules serve as natural advice when the target language is known to satisfy such structural properties.
In our approach, an advice TRS defines an equivalence relation relative to the target language $\lang$.
Specifically, if a tree $t$ is in $\lang$, then all trees obtained by rewriting $t$ must be in~$\lang$. Similarly rewriting is compatible with the complement of $\lang$. 
We also consider one-sided relations stating that 
(1)~trees from $\lang$ are rewritten to trees from $\lang$  only, while there are no restrictions for trees not in $\lang$, or
(2)~trees not in $\lang$ are rewritten to trees not in $\lang$ only.
This allows the TRS to capture high-level structural properties, such as symmetry or associativity of a given symbol, while it is not expected to characterize the learned language completely. 

\paragraph{Contributions.}
Active automata learning with advice has been already introduced for finite automata on words~\cite{ecai25}. 
In this paper we study automata over trees and present the following substantial contributions:
\begin{enumerate}
    \item We develop an algorithm that infers answers to equivalence queries for tree automata (Section~\ref{s:advice}).
    The transition to trees introduces distinct technical challenges absent in the word case.
    In particular, general inference is computationally difficult. We therefore identify subclasses of TRSs that admit tractable inference.

    \item To illustrate the efficacy of the advice mechanism, we provide TRSs expressing natural tree language properties and show, through experimental evaluation, a significant reduction in query complexity (Section~\ref{sec:examples}).

    \item Finally, we introduce the synthesis problem for TRS for a given regular tree language, which is 
     the converse of active learning with advice: given a regular tree language, find a TRS consistent with it.
     We establish a connection between synthesis of TRS and 
     the classical problem of finding synchronizing words in DFA (Section~\ref{sec:synthesis}).   
\end{enumerate}
This is an extended version of the paper~\cite{ijcai26}. 
Whe code and experiments data are available at~\cite{experimentsRepo}.

\paragraph{Related work.}
There is a large body of work on optimizing the computational and query complexity of the \lStar-algorithm for word automata~\cite{TTTalgorithm,ADTalgorithm,Lsharp,DierlFHJST24} 
and tree automata~\cite{DBLP:journals/mst/DrewesH07,DBLP:journals/tcs/Kasprzik13}. 
These works are orthogonal to our approach, as our algorithm is compatible with any active learning algorithm using equivalence queries.
Our work directly extends~\cite{ecai25} to the tree setting. Independently, enhancing the performance of active tree automaton learning 
by restricting focus to specific subclasses of tree languages remains an active area of research~\cite{DBLP:conf/fossacs/HeerdtKR021,BjorklundBE17}. 
Furthermore, 
extending the types of queries to improve the efficiency of active learning has also been investigated~\cite{DBLP:journals/jalc/TirnaucaT07}.
 
\section{Preliminaries}
A \textbf{signature} (or ranked alphabet) $\Sig$ is a pair $(F,\arity)$ consisting of the set of symbols $F$ and a function 
$\arity \colon F \to \N$ assigning a unique \textbf{arity} to each symbol. 
To ease the notation, we will write $f \in \Sig$ meaning $f \in F$.

\subsection{Terms}

For a signature $\Sig$ and a set of \textbf{variables} $\Vars$, 
the set of \textbf{terms} over $\Sig$ and $\Vars$, denoted by $\trees(\Sig,\Vars)$,  is the least set containing $\Vars$
such that for all $f \in \Sig$, if $k = \arity(f)$ and $t_1, \ldots, t_k \in \trees(\Sig,\Vars)$, 
then $f(t_1, \ldots, t_k) \in \trees(\Sig,\Vars)$.
In particular, if $a \in \Sig$ and $\arity(a) = 0$, then $a \in \trees(\Sig,\Vars)$.
The set of \textbf{ground terms} $\trees(\Sig)$ over the signature $\Sig$ is the set $\trees(\Sig,\emptyset)$, i.e., 
terms containing no variables. 
We consider signatures $\Sig$ that  contain at least one constant symbol, as otherwise $\trees(\Sig)$ is empty.
A term $t$ is \textbf{linear} if every variable occurs at most once in $t$.

A \textbf{substitution} $\subst$ is a function from variables to terms with a finite domain. 
Each substitution $\subst$ can be uniquely extended to $\substExt$ over all terms  $\trees(\Sig,\Vars)$ as follows:
for $X \in \Vars$ we have
$\substExt(X) = \subst(X)$ if $X \in \dom(\subst)$ and otherwise $\substExt(X) = X$.
For a term $t$, which is not a variable, we have $t = f(t_1, \ldots, t_k)$, and 
$\substExt(t) = f(\substExt(t_1), \ldots, \substExt(t_k))$.
Since the extension from $\subst$ to $\substExt$ is unique, we refer to $\substExt$ as $\subst$. 
For a~variable $X$ and terms $t,s$, we denote by $t[X \to s]$ the term $\sigma_s(t)$ resulting 
from the substitution $\sigma_s(X) = s$. 

\subsection{Trees as terms}

A (ranked) {tree} over $\Sig$ is a ground term over $\Sig$ and we refer to $\trees(\Sig)$ as the set of trees over $\Sig$.
Referring to ranked trees as ground terms is common in the literature~\cite{tata}. 
In the unranked case, when symbols do not have unique arities, the order-based definition of trees is required, but 
in this paper we consider only ranked trees. 

A \textbf{context} $c$ is a term with a single occurrence of a variable $X$. 
For a context $c$ and a tree (ground term) $t$, we define $c(t) = c[X \to t]$, i.e., the substitution $\sigma(X) = t$ applied to $c$.

\subsection{Tree automata}

A bottom-up deterministic finite tree automaton (DFTA) is a tuple $(\Sig, Q, \Qfin, \delta)$
consisting of
    the signature $\Sig$, 
    a finite set of states $Q$, 
    a set of accepting states $\Qfin$, and
    a transition function  $\delta \colon Q^* \times \Sig \to Q$.
The transition function  $\delta$ is a partial function such that $\delta(u,f)$ is defined if and only if $|u|$ is equal to $\arity(f)$.   
We extend $\delta$ to a function $\deltaFinal \colon \trees(\Sig) \to Q$ inductively:
\begin{enumerate}
    \item If $t$ is a constant $c \in \Sig$, then $\deltaFinal(c)$ is the unique state $q$ such that $\delta(c) = q$.
    \item If $t = f(t_1, \ldots, t_k)$, then $\deltaFinal(t) = \delta(\deltaFinal(t_1), \ldots, \deltaFinal(t_k), f)$.
\end{enumerate}   
The tree language \textbf{recognized by} $\aut$, denoted by $\lang(\aut)$, is the set of all trees 
such that $\deltaFinal(t)$ is an accepting state, i.e., 
$\lang(\aut) = \set{t \mid \deltaFinal(t) \in \Qfin}$.

\subsection{Myhill-Nerode theorem for tree languages}
For a tree language $\lang$ over $\Sig$ we define $\congL$ on trees over $\Sig$ as follows:
for all trees $t_1, t_2$ over $\Sig$ we have $t_1 \congL t_2$ if and only if for every context $c$ over $\Sig$ 
we have $c(t_1) \in \lang \Iff c(t_2) \in \lang$. 

\begin{lemma}
A tree language $\lang$ is regular if and only if $\congL$ has finitely many equivalence classes. 
\end{lemma}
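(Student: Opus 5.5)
We prove the two directions separately. Throughout we take ``regular'' to mean recognized by a DFTA as defined above, i.e., with a total transition function on arguments of the right arity. This is harmless: nondeterministic or partial automata can be determinized and completed with a sink state by the standard subset construction.

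\emph{($\Rightarrow$)} Suppose $\lang = \lang(\aut)$ for a DFTA $\aut = (\Sig, Q, \Qfin, \delta)$. The key claim is that $\deltaFinal(t_1) = \deltaFinal(t_2)$ implies $t_1 \congL t_2$. To prove it, we show by structural induction on a context $c$ that $\deltaFinal(c(t_1)) = \deltaFinal(c(t_2))$.
\begin{itemize}
\item If $c = X$, this is the hypothesis.
\item If $c = f(s_1,\ldots,s_k)$, the variable $X$ occurs in exactly one $s_i$. The other $s_j$ are ground and unchanged by the substitution. By the induction hypothesis $\deltaFinal(s_i(t_1)) = \deltaFinal(s_i(t_2))$, and the claim follows from the inductive definition of $\deltaFinal$.
\end{itemize}
Hence $c(t_1) \in \lang \Iff c(t_2) \in \lang$ for every context $c$, so $t_1 \congL t_2$. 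The map $t \mapsto \deltaFinal(t)$ is therefore constant on a refinement of $\congL$, and so $\congL$ has at most $|Q|$ classes.

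\emph{($\Leftarrow$)} Suppose $\congL$ has finitely many classes. We build the canonical automaton:
\begin{itemize}
\item $Q$ is the set of $\congL$-classes $[t]$;
\item $\Qfin = \set{[t] \mid t \in \lang}$;
\item $\delta([t_1],\ldots,[t_k], f) = [f(t_1,\ldots,t_k)]$.
\end{itemize}

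The main step is to show that $\congL$ is a congruence, so that $\delta$ and $\Qfin$ are well defined. For $\Qfin$, take the context $X$: if $t \congL t'$ then $t \in \lang \Iff t' \in \lang$. For $\delta$, suppose $t_i \congL t_i'$ for all $i$. We change one argument at a time, passing through intermediate trees $f(t_1',\ldots,t_{i-1}', t_i, \ldots, t_k)$. For a single swap at position $i$, and any context $c$, the term
\[
c' = c[X \to f(t_1',\ldots,t_{i-1}',X,t_{i+1},\ldots,t_k)]
\]
is again a context, since it has a single occurrence of $X$. Then $c(f(\ldots,t_i,\ldots)) = c'(t_i)$, and $c'(t_i) \in \lang \Iff c'(t_i') \in \lang$ because $t_i \congL t_i'$. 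So each swap preserves $\congL$, and by transitivity $f(t_1,\ldots,t_k) \congL f(t_1',\ldots,t_k')$.

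Finally, induction on $t$ gives $\deltaFinal(t) = [t]$. Therefore $t \in \lang(\aut)$ iff $[t] \in \Qfin$ iff $t \in \lang$, using once more that membership in $\lang$ is constant on classes.

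The only genuinely delicate point is the congruence argument, in particular building composite contexts that keep exactly one occurrence of $X$. Everything else is routine structural induction.
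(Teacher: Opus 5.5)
Your proof is correct. The paper gives no proof of its own and simply quotes this as the classical Myhill--Nerode theorem for tree languages; your argument is the standard textbook proof of it. In one direction the kernel of $t \mapsto \deltaFinal(t)$ refines $\congL$; in the other, $\congL$ is a congruence, so its classes form a quotient DFTA with $\deltaFinal(t) = [t]$.

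The only thing to tidy is the phrase ``constant on a refinement of $\congL$''. What you actually showed is that the kernel of $t \mapsto \deltaFinal(t)$ refines $\congL$, so $\deltaFinal(t)$ determines $[t]$ and there are at most $|Q|$ classes.
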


Furthermore, for every regular tree language $\lang$ a \textbf{minimal} DFTA recognizing $\lang$ exists 
and it is unique (up to an isomorphism). The states of the minimal DFTA for $\lang$ correspond 
to equivalence classes of $\congL$. Note that all states in the minimal DFTA are reachable.

\subsection{Learning tree automata}
\label{sec:angluin-trees}
The framework of active tree-automata learning assumes an oracle, called \emph{minimally adequate teacher}, 
which answers two types of queries about the target tree language $\lang$: 
\begin{itemize}
\item \textbf{membership queries}: given a tree $t \in \trees(\Sig)$, is $t \in \lang$?, and
\item \textbf{equivalence queries}: given a DFTA $\aut$, is $\lang(\aut) = \lang$? 
If not the teacher returns a counterexample, which is a tree from exactly one of the sets $\lang(\aut)$ and $\lang$.
\end{itemize}
The tree-automata variant of the \lStar-algorithm~\cite{DBLP:conf/dlt/DrewesH03} having access to
the oracle for a tree language $\lang$ returns the minimal DFTA $\aut_{\lang}$ recognizing ${\lang}$.
It works in polynomial time in $|\aut_{\lang}|$ and the total size of counterexamples supplied by the oracle. 
Unlike the word-automaton case, where the shortest counterexample is linear in the size of the minimal DFA,
a minimal-size counterexample tree can be of exponential size. 
However, if counterexamples are presented as directed acyclic graphs (DAGs),
their size remains polynomially bounded in $|\aut_{\lang}|$~\cite{charatonik1999automata,DBLP:journals/ipl/AnantharamanNR05}.

\subsection{Term rewriting systems}

A \textbf{term rewriting system (TRS)} $\rew$ over a signature $\Sig$ is 
a finite set of pairs of terms $(l,r)$ over $\Sig$.
A pair of terms  $(l,r)$  from $\rew$ is called a \textbf{rewrite rule} and denoted by $l \rightarrow r$.
For a TRS $\rew$, we define a \emph{single-step rewrite relation} $\rewritesOneStep_{\rew}$ over 
terms from $\trees(\Sig,\Vars)$ as the least relation such that
(i)~for all substitutions $\subst$ and all rules $(l,r)$ from $\rew$ we have $\subst(l) \rewritesOneStep_{\rew} \subst(r)$, and
(ii)~for all $f \in \Sig$, if $k$ is the arity of $f$ and $s \rewritesOneStep_{\rew} t$, 
then for all terms $s_1, \ldots, s_{k-1}$ and all $i$
we have $f(s_1, \ldots, s_{i-1},s,s_{i+1}, \ldots, s_k) \rewritesOneStep_{\rew} f(s_1, \ldots, s_{i-1},t,s_{i+1}, \ldots, s_k)$.
The rewriting relation $\rewrites_{\rew}$ is the transitive and reflexive closure of $\rewritesOneStep_{\rew}$.
We will omit the $\rew$ subscript if the TRS is clear from the context. 

\paragraph{Normal forms.}  
A term $t$ is in a \textbf{normal form} if there is no $t'$ such that $t \rewritesOneStep_{\rew} t'$. 
If for a term $s$ there is a unique $t$ in a normal form such that $s \rewrites_{\rew} t$, we say
that $t$ is the normal form of $s$ (w.r.t. $\rew$) and denote it by $\normalFormTRS{\rew}{s}$.

\paragraph{Computing normal forms.}  
A (finite) TRS $\rew$ is
\emph{terminating} if every sequence of terms $s_0, s_1, \ldots$ 
    such that $s_i \rewritesOneStep s_{i+1}$ is finite, 
\emph{confluent} if for all terms $s,s_1, s_2$, if $s \rewrites s_1$ and $s \rewrites s_2$, then there is $t$ such that
    $s_1 \rewrites t$ and $s_2 \rewrites t$, and
\emph{convergent} if it is terminating and confluent.
In a convergent TRS, every term $s$ has the (unique) normal form, 
which can be computed by applying reductions in an arbitrary order as long as the term can be reduced. 
Termination guarantees that this process takes finitely many steps and confluence entails the uniqueness of the result.
Therefore, for a convergent TRS $\rew$, one can effectively compute $\normalFormTRS{\rew}{s}$.
However, showing that a TRS is convergent is generally difficult~\cite{BaaderBook}.

\section{Learning tree automata with advice}
\label{s:advice}{
\newcommand{\lExt}{\hat{l}}
\newcommand{\rExt}{\hat{r}}
In this section, we adapt the framework of active learning with advice~\cite{ecai25} to tree automata. 
We first formally define the active learning problem with advice, then discuss how to resolve queries using that advice.

In our framework, advice for the learning  algorithm  is given via TRSs, which relate to tree languages 
through the following notions of consistency:

\begin{definition}[\cite{ecai25}]
\label{def:consistent}
Let $\Sig$ be a signature, $\rew$ be a TRS over $\Sig$ and $\lang$ be a (regular) tree language over $\Sig$. 
We say that 
\begin{itemize}
\item $\rew$ is \emph{(fully) consistent} with $\lang$ if and only if
for all trees $s,t$, if $s \rewrites_{\rew} t$, then $s \in \lang \iff t \in \lang$.  
\item $\rew$ is \emph{positively consistent} with $\lang$ if and only if for all trees $s,t$, if $s \rewrites_{\rew} t$, then $s \in \lang \implies t \in \lang$.  
\item $\rew$ is \emph{negatively consistent} with $\lang$ if and only if for all trees $s,t$, 
if $s \rewrites_{\rew} t$, then $s \notin \lang \implies t \notin \lang$.  
\end{itemize}
\end{definition}

Note that a TRS $\rew$ is fully consistent with $\lang$ if and only if it is both positively and negatively consistent.
While more granular than standard consistency, positive and negative consistency are harder to infer.

Having the notions of consistency, we can formally state the active learning with advice problem.

\begin{definition}
The active tree-automata learning with advice problem for an unknown regular tree language $\targetLang$ is as follows:
\begin{itemize}
\item \textbf{Input}: 
(1)~an oracle answering membership and equivalence queries about the target regular tree language $\targetLang$, and
(2)~three advice TRSs: 
$\rew_{=}$, which is consistent with $\targetLang$, and 
$\rew_{+}, \rew_{-}$, which are respectively positively and  negatively consistent with $\targetLang$.
\item \textbf{Output:} the minimal DFTA that recognizes $\targetLang$.
\end{itemize}
\end{definition}

\paragraph{Reducing equivalence queries with advice.}
To reduce the number of equivalence queries, 
the algorithm first checks whether an advice TRS $\rew$ is consistent (resp., positively or negatively consistent) with
the language of a candidate automaton $\aut$ before querying the oracle. 
If it is not consistent, an equivalence query is unnecessary because $\lang(\aut)$ must differ from the target language 
with which $\rew$ is known to be consistent (resp., positively or negatively consistent).
In such cases, the algorithm computes a pair of trees violating consistency (resp., positive or negative consistency). 
For example, for positive consistency, violating trees $s,t$ are such that $s \rewrites_{\rew} t$ 
but $s \in \lang(\aut) \implies t \in \lang(\aut)$ does not hold.
One of the trees $s$ or $t$ has to be the counterexample to $\lang(\aut)$  being equal to the target language. 
The algorithm decides which one with a single membership query.

\begin{remark}[Rewrite rules vs. regular languages]
Rewrite rules can express non-regular properties.
For example, $f(X,X) \to a$ expresses that the value of 
a tree with identical immediate subtrees the same as the value of $a$ with respect to $\lang$. 
Tree automata cannot express equality of subtrees. 
Similarly, 
while associativity $f(X,f(Y,Z)) \to f(f(X,Y),Z)$ does not compare subtrees, applying it iteratively can reshape trees, 
destroying information encoded in the tree structure. 
For instance, consider a regular tree language over constants $a,b$ and binary $f$ such that  
the string of leaf labels forms a palindrome and every palindrome appears in some tree. 
This language is defined to contain constants $a,b$ and be closed under the contexts
 $f(a,f(X,a))$ and $f(b,f(X,b))$. 
A DFTA can detect whether a tree is built from these two contexts and constants. 
However, if such tree is reshaped to a skewed tree, no DFTA can recognize whether the leaves form a palindrome.
Consequently, this language cannot be extended to satisfy associativity while retaining its key properties.
Despite this, we only ask whether a language \emph{known to be regular} satisfies rewrite rules. 
\end{remark}

\subsection{Deciding consistency}
\label{s:deciding-consistency}

We discuss how to efficiently decide consistency.
First, we establish a condition that reduces checking consistency of a TRS with a tree language 
to checking combinatorial properties of the minimal DFTA recognizing that language~(Lemma~\ref{l:compute-inconsistency-witnesses}).
Next, we discuss the complexity of deciding this combinatorial condition.    

To conveniently define the combinatorial condition, 
we extend the function $\deltaFinal$ from trees (i.e., ground terms) to all terms, 
where terms with variables define \emph{state transformations}:

\begin{definition}
Let $\aut$ be a minimal DFTA and $t$ be a term over variables $X_1, \ldots, X_k$. 
We define $\deltaExt{\aut}(t)$ as a function from $Q^k$ to $Q$ such that $\deltaExt{\aut}(t)(q_1, \ldots, q_k)$ 
is the state assigned to $t$ if in leaves labelled by $X_i$ the DFTA $\aut$ starts in state~$q_i$. 
Formally, let $s_1, \ldots, s_k$ be trees over $\Sig$ such that $\deltaFinal(s_i) = q_i$. 
Then, $\deltaExt{\aut}(t)(q_1, \ldots, q_k) = \deltaFinal(t[X_1 \to s_1, \ldots, X_k \to s_k])$.
\end{definition} 

Note that in a minimal DFTA every state is reachable, and hence trees $s_1, \ldots, s_k$ as in the above definition exist.

\begin{lemma}
\label{l:compute-inconsistency-witnesses}
Let $\rew$ be a TRS over $\Sig$ and $\aut$ be a minimal DFTA over $\Sig$. 
The TRS $\rew$ is \emph{consistent} with $\lang(\aut)$ if and only if 
the functions $\deltaExt{\aut}(l)$ and $\deltaExt{\aut}(r)$ coincide, 
for each rule $l \rightarrow r \in \rew$.
\end{lemma}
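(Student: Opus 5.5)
Both directions rest on one observation: for a minimal DFTA, two trees in the same state are $\sim_{\lang(\aut)}$-equivalent, and conversely, because states correspond to equivalence classes of $\sim_{\lang(\aut)}$ (Myhill--Nerode). A preliminary point is that $\deltaExt{\aut}(t)$ is well defined, i.e., independent of the choice of representatives $s_i$. This follows by induction on $t$ from the inductive definition of $\deltaFinal$: the state of $t[X_1\to s_1,\dots]$ depends only on the states of the $s_i$. The same induction shows that for every substitution $\subst$ mapping the variables of $t$ to trees, $\deltaFinal(\subst(t)) = \deltaExt{\aut}(t)(\deltaFinal(\subst(X_1)),\ldots,\deltaFinal(\subst(X_k)))$. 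For variables of $r$ not occurring in $l$ (or vice versa), I will treat both $l$ and $r$ as functions of the union of their variables, so the comparison is between functions on the same $Q^k$.

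\emph{($\Leftarrow$)} Assume the functions coincide for every rule. It suffices to show that a single step $s \rewritesOneStep t$ between trees preserves the state, $\deltaFinal(s)=\deltaFinal(t)$; transitivity then handles $\rewrites$. The state alone determines membership, so $s\in\lang(\aut)\iff t\in\lang(\aut)$. I will argue by induction on the derivation of $\rewritesOneStep$. In the base case $s=\subst(l)$ and $t=\subst(r)$, and the substitution identity above together with equality of the functions gives equal states. In the congruence case $s=f(\ldots,s',\ldots)$ and $t=f(\ldots,t',\ldots)$; the induction hypothesis gives $\deltaFinal(s')=\deltaFinal(t')$, and determinism of $\delta$ gives equal states at the root. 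One subtlety is that the rewrite relation is defined on terms with variables, and intermediate terms could a priori contain variables even when $s$ and $t$ are ground. I will sidestep this by grounding any such variables with an arbitrary fixed tree (this exists because $\Sig$ has a constant), or simply note that rewriting a ground term with rules whose right-hand-side variables occur in $l$ stays ground. If extra variables appear in $r$, substituting ground trees for them still yields a rewrite step between trees, and the identity still applies.

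\emph{($\Rightarrow$)} Suppose some rule $l\to r$ and some states $q_1,\ldots,q_k$ give $p=\deltaExt{\aut}(l)(\vec q)\neq p'=\deltaExt{\aut}(r)(\vec q)$. Pick trees $s_i$ with $\deltaFinal(s_i)=q_i$; these exist by reachability in the minimal automaton. Set $u=l[\vec X\to\vec s]$ and $v=r[\vec X\to\vec s]$, so $u\rewritesOneStep v$ with $\deltaFinal(u)=p\ne p'=\deltaFinal(v)$. By minimality, distinct states are $\sim_{\lang(\aut)}$-inequivalent, so there is a context $c$ with exactly one of $c(u),c(v)$ in $\lang(\aut)$. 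Closure of $\rewritesOneStep$ under contexts, applied by induction along the path of $c$, gives $c(u)\rewritesOneStep c(v)$, which contradicts consistency.

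The main obstacle is the bookkeeping around variables: making the substitution identity precise, and handling rules whose two sides have different variable sets. The rest is routine induction.
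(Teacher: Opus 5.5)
Your proof is correct and follows essentially the same route as the paper. Your ($\Rightarrow$) direction is identical: representatives $s_i$ exist by reachability, and minimality gives a distinguishing context. Your ($\Leftarrow$) direction is the direct form of the paper's contrapositive argument, which likewise writes a single rewrite step as a rule instance $\sigma(l) \to \sigma(r)$ under a context and uses determinism to propagate equality of states. Your explicit treatment of the well-definedness of $\deltaExt{\aut}$, of non-ground intermediate terms, and of rules whose two sides have different variable sets fills in bookkeeping the paper leaves implicit.
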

\begin{proof}
We show the implication from right to left.
Assume that $\rew$ is not {consistent} with $\lang(\aut)$. Then, there are trees $s,t$
such that $s \rewrites_{\rew} t$ and either $s \in \lang(\aut)$ and $t \notin \lang(\aut)$, or 
$s \notin \lang(\aut)$ and $t \in \lang(\aut)$.
Without loss of generality, assume that $s \in \lang(\aut)$ and $t \notin \lang(\aut)$.
Then, there is a rewriting sequence $s = t_0 \rightarrow t_1 \ldots \rightarrow t_n = t$.
Consider the first position $i$ such that $t_i \in \lang(\aut)$, but $t_{i+1} \notin \lang(\aut)$. 
Since $t_{i} \to t_{i+1}$, there is a context $c$ and the grounding substitution $\sigma$ such that
$t_{i} = c[\sigma(l)]$ and $t_{i+1} = c[\sigma(r)]$. 
Let $q_1 = \deltaFinal(\sigma(l))$ and  $q_2 = \deltaFinal(\sigma(r))$. These states have to be different as otherwise 
both $t_{i}$ and $t_{i+1}$ would have the same value in $\lang(\aut)$. 
It follows that on the states that correspond to the grounding substitution $\sigma$ 
the function $\deltaExt{\aut}(\sigma(l))$ and $\deltaExt{\aut}(\sigma(r))$ differ as one returns $q_1$ and the other $q_2$.

For the implication from left to right. Let $l \to r \in \rew$ be the rule such that 
the functions $\deltaExt{\aut}(l)$ and $\deltaExt{\aut}(r)$ are different. 
Consider
$\vec{q}$ that  witnesses that $\deltaExt{\aut}(l)$ and $\deltaExt{\aut}(r)$ are different.
Based on $\vec{q}$  we define the substitution $\sigma$ such that $\sigma(X_i)$ is the tree $s_i$ satisfying 
$\deltaFinal(s_i) = \vec{q}[i]$. Then,
$\deltaFinal(\sigma(l)) \neq \deltaFinal(\sigma(r))$. As $\aut$ is minimal, different states are distinguished with contexts and hence
there exists $c$ such that $c[\sigma(l)] \in \lang(\aut)$ and $c[\sigma(r)] \notin \lang(\aut)$ or vice versa. 
In any case, $l \to r \in \rew$ is not consistent with $\lang(\aut)$.
\end{proof}

It is essential that the TRS and the DFTA are over the same signature~$\Sig$, as otherwise checking consistency 
is undecidable~\cite{ecai25}.

\begin{remark}
Consider a rule $f(X,X) \to a$, for which the left-hand side defines a non-regular language.
Observe that Lemma~\ref{l:compute-inconsistency-witnesses} implies that if $f(X,X) \to a$ is consistent with the language of a minimal DFTA $\aut$, then
for all trees $s_1, s_2$ labelled with the same state by $\aut$ ($\deltaFinal(s_1) = \deltaFinal(s_2)$), the tree 
$f(s_1,s_2)$ is labeled with the state $\deltaFinal(a)$, which is a substantially stronger condition than just $f(X,X) \to a$.
\end{remark}

Now, in the remaining part of this section, we discuss the complexity of evaluating the condition from Lemma~\ref{l:compute-inconsistency-witnesses}.

For any tuple of states $\vec{q} \in Q^k$, 
the value $\deltaExt{\aut}(l)(\vec{q})$ (resp., $\deltaExt{\aut}(r)(\vec{q})$) 
can be computed in polynomial time in $|l|$ (resp., $|r|$). 
However, the number of arguments is equal to the number of variables in terms $l,r$ and 
hence there may be exponentially many tuples of states to check.  
Still, Lemma~\ref{l:compute-inconsistency-witnesses} implies that consistency can be checked in $\coNP$:
\begin{proposition}
Checking consistency of a TRS with a tree language given by a DFTA is in $\coNP$.
For every $N>0$, consistency is decidable in polynomial time over TRSs with at most $N$ variables, i.e., 
in every rule $l \to r$, both terms $l,r$ are over $\{X_1, \ldots, X_N\}$.
\end{proposition}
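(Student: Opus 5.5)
We reduce both claims to Lemma~\ref{l:compute-inconsistency-witnesses}. By that lemma, $\rew$ is inconsistent with $\lang(\aut)$ exactly when some rule $l \to r \in \rew$ and some tuple $\vec{q} \in Q^k$ satisfy $\deltaExt{\aut}(l)(\vec{q}) \neq \deltaExt{\aut}(r)(\vec{q})$, where $k$ is the number of variables occurring in $l$ or $r$.

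One technical point comes first. The lemma is stated for a minimal DFTA, but the input DFTA need not be minimal. We therefore minimize $\aut$ in polynomial time, using the standard partition-refinement algorithm for DFTAs after removing unreachable states; this does not change $\lang(\aut)$. From now on $\aut$ is minimal, so every state is reachable and $\deltaExt{\aut}$ is well defined.

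\textbf{The $\coNP$ bound.} We give an $\NP$ algorithm for the complement. It guesses a rule $l \to r$ and a tuple $\vec{q} \in Q^k$; this certificate has size polynomial in $|\rew|$ and $|Q|$. It then evaluates $\deltaExt{\aut}(l)(\vec{q})$ and $\deltaExt{\aut}(r)(\vec{q})$ and accepts if the two values differ. Evaluation is a bottom-up pass over the term: each leaf labelled $X_i$ gets the state $q_i$, each constant gets its $\delta$-value, and each inner node $f(u_1,\ldots,u_m)$ gets $\delta$ applied to its children's states and $f$. This pass is polynomial in $|l|$ (resp.\ $|r|$) and in the size of $\aut$.

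We must check that this pass really computes $\deltaExt{\aut}$ as defined, namely via trees $s_i$ with $\deltaFinal(s_i)=q_i$. This follows by a straightforward induction on the term: $\deltaFinal$ of $t[X_1\to s_1,\ldots]$ depends only on the states $\deltaFinal(s_i)$, because $\deltaFinal$ is defined compositionally. The same induction shows the definition is independent of the choice of the $s_i$. So the guessed witness is correct exactly when the rule is violated, and consistency is in $\coNP$.

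\textbf{Fixed number of variables.} If every rule uses only variables from $\{X_1,\ldots,X_N\}$, we replace the guess by exhaustive search. For each rule we enumerate all $|Q|^N$ tuples and compare the two evaluations. The total running time is $O(|\rew|\cdot|Q|^N\cdot \mathrm{poly}(|l|+|r|,|\aut|))$, which is polynomial for fixed $N$.

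The only step needing care is the well-definedness and bottom-up computability of $\deltaExt{\aut}$, together with the reduction to a minimal automaton. Both are routine, and the remainder is direct bookkeeping.
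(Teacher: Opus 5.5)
Your proposal is correct and follows the same route as the paper. Via Lemma~\ref{l:compute-inconsistency-witnesses}, you guess a rule and a state tuple $\vec{q}$ and evaluate $\deltaExt{\aut}(l)(\vec{q})$ and $\deltaExt{\aut}(r)(\vec{q})$ bottom-up in polynomial time, and with at most $N$ variables you enumerate all $|Q|^N$ tuples; your explicit minimization step fills in a point the paper leaves implicit, since that lemma is only stated for minimal DFTA.
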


In particular, the consistency problem is decidable in polynomial time if the TRS is fixed and only the DFTA is the input.

\paragraph{Computing counterexamples.} 
Given a rule $l \rightarrow r \in \rew$ such that $\deltaExt{\aut}(l) \neq \deltaExt{\aut}(r)$ and 
$q_1, \ldots, q_k$ witnessing this, i.e., 
$\deltaExt{\aut}(l)(q_1, \ldots, q_k) \neq \deltaExt{\aut}(r)(q_1, \ldots, q_k)$, 
we can compute two trees witnessing consistency being violated. 
Let $s_1, \ldots, s_k$ be trees such that  $\deltaFinal(s_i) = q_i$.
Consider trees:
\begin{itemize}
    \item $\lExt = l[X_1 \to s_1, \ldots, X_k \to s_k]$, and 
    \item $\rExt = r[X_1 \to s_1, \ldots, X_k \to s_k]$. 
\end{itemize}     
Let $q_l = \deltaFinal(\lExt)$ and $q_r  = \deltaFinal(\rExt)$. Then, $q_l \leq q_r$.
Since $\aut$ is minimal, there is a context $c$ distinguishing states $q_l$ and $q_r$.
Thus, for $s = c(\lExt) $ and $t = c(\rExt)$, exactly one of 
$s,t$ belongs to $\lang(\aut)$.
In summary, while $s \rewrites t$ in one step, $s \in \lang(\aut) \Iff t \in \lang(\aut)$ does not hold.
Since we know that $s \in \targetLang \Iff t \in \targetLang$, one of $s,t$ is a counterexample.

In general, the $\coNP$ upper bound cannot be improved:

\begin{lemma}
Checking consistency of a rule $l \to r$, where $l$ is a (non-linear) term and $r$ is a ground term  
with the language of a given DFTA is $\coNP$-hard.
\label{l:consistency-np-hard}
\end{lemma}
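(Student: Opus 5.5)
The plan is to reduce from unsatisfiability of propositional CNF formulas (a $\coNP$-complete problem) to consistency, using Lemma~\ref{l:compute-inconsistency-witnesses} to turn consistency into an equality of state transformations. If $r$ is ground, then $\deltaExt{\aut}(r)$ is a constant function with value $q_r = \deltaFinal(r)$. So $l \to r$ is inconsistent exactly when some tuple of states $\vec{q}$ satisfies $\deltaExt{\aut}(l)(\vec{q}) \neq q_r$. This has the shape of a satisfiability question, with the tuple $\vec{q}$ playing the role of a truth assignment.

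\emph{Construction.} Fix the signature $\Sig$ with constants $0,1$, a unary symbol $\neg$, and binary symbols $\wedge,\vee$. Let $\aut$ be the DFTA with states $Q=\{q_0,q_1\}$ and $\Qfin=\{q_1\}$. Its transitions evaluate Boolean operations: $\delta(0)=q_0$, $\delta(1)=q_1$, and $\neg,\wedge,\vee$ act on $q_0,q_1$ as the usual truth tables. This automaton is fixed and does not depend on the input.

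\emph{Minimality.} The automaton is minimal because $q_0$ and $q_1$ are both reachable and the trivial context $X$ separates them. Hence Lemma~\ref{l:compute-inconsistency-witnesses} applies. Checking this is the one point that needs care: the variables of $l$ range over \emph{all} states of $\aut$, so no extra states (for instance a sink state) may exist, since they would give spurious witnesses. Keeping the signature purely Boolean and the transition function total on $\{q_0,q_1\}$ avoids this.

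\emph{The reduction.} Given a CNF formula $\varphi$ over propositional variables $x_1,\ldots,x_n$, write $\varphi$ as a term $l_\varphi$ over variables $X_1,\ldots,X_n$ by replacing each $x_i$ with $X_i$ and bracketing the conjunctions and disjunctions into binary ones. The term $l_\varphi$ has size linear in $|\varphi|$, and it is non-linear because variables occur repeatedly. We may assume $l_\varphi$ is not a single variable, for instance by padding it as $l_\varphi \wedge 1$. Take $r = 0$.

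\emph{Correctness.} By induction on terms, $\deltaExt{\aut}(l_\varphi)(q_{b_1},\ldots,q_{b_n}) = q_{\varphi(b_1,\ldots,b_n)}$ for every assignment $b_1,\ldots,b_n\in\{0,1\}$, and these tuples are all of $Q^n$. Meanwhile $\deltaExt{\aut}(0)$ is the constant function $q_0$. By Lemma~\ref{l:compute-inconsistency-witnesses}, the rule $l_\varphi \to 0$ is consistent with $\lang(\aut)$ if and only if $\varphi$ evaluates to false under every assignment, that is, if and only if $\varphi$ is unsatisfiable.

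The reduction is computable in polynomial time, which gives $\coNP$-hardness. In fact hardness holds even for a fixed DFTA, with only the rule given as input.
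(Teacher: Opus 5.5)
Your proof is correct and follows essentially the same route as the paper. Both use a fixed two-state DFTA evaluating Boolean expressions, encode the formula as the non-linear left-hand side with a truth constant as the ground right-hand side, and apply Lemma~\ref{l:compute-inconsistency-witnesses}; the only difference is that you reduce from CNF-unsatisfiability with $r = 0$, while the paper reduces from DNF-tautology with $r = \top$, which is the dual formulation of the same $\coNP$-complete problem.
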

\begin{proof}
We reduce the tautology problem for DNF formulas problem to our problem. 
Let $\Sig$ consists of binary $\land, \lor$, unary $\lnot$ and constants $\top, \bot$.
We define a DFTA $\aut_{\textrm{eval}}$ having two states $\{q_0,q_1\}$, which evaluates a given tree over $\Sig$ to $q_1$ if the corresponding 
logical expression is true and $q_0$ otherwise. The state $q_1$ is accepting. 
Now, the tree $\top$ is accepted by $\aut_{\textrm{eval}}$. 
We transform a given DNF formula $\varphi$ into the corresponding term $t_{\varphi}$ over 
$\Sig$. Then, the formula $\varphi$ is a tautology if and only if 
for every substitution $\sigma$ we have $\deltaFinal(\sigma(l)) = q_1$. 
The latter is equivalent to the rule $t_{\varphi} \to \top$ being consistent with the language of $\aut_{\textrm{eval}}$.
\end{proof}

The above lemma holds even for a fixed DFTA. 
Still, the hardness argument relies on variables occurring multiple times. 
Therefore, a natural question is what is the complexity of checking consistency of rules $l \to r$, in which both terms are linear.
We leave this as an \emph{open problem} and propose a simple heuristic.

\paragraph{Heuristics for linear rules.} 
We propose a heuristic for linear terms based on the \textbf{state-counting function}, 
which counts the cardinality of the preimage $\deltaExt{\aut}(l)$ and $\deltaExt{\aut}(r)$ with multiplicities. 
This heuristic does not confirm consistency, but it can identify some rules that violate consistency.
This is not a problem as consistency itself overapproximates equivalence. 
Consequently, any heuristic that overapproximates consistency also 
overapproximates equivalence, i.e., the heuristic does not err on the negative answers. 
 
\begin{definition}
For a term $t$ and a DFTA $\aut$, we define the state-counting function $\countAut{\aut}{t} \colon Q \to \N$ as 
the cardinality of the preimage $\deltaExt{\aut}^{-1}[\set{q}]$, i.e., 
$\countAut{\aut}{t}(q) = |\set{(q_1, \ldots, q_k) \mid \deltaExt{\aut}(t)(q_1, \ldots, q_k) = q }|$.
\end{definition}

Observe that if $\deltaExt{\aut}(l) = \deltaExt{\aut}(r)$, then $\countAut{\aut}{l} = \countAut{\aut}{r}$. 
Consequently, checking $\countAut{\aut}{l} = \countAut{\aut}{r}$ serves as a necessary (but not sufficient) condition for consistency of $l \to r$ with $\lang(\aut)$ (i.e., it is an underapproximation).  
This provides a computationally efficient heuristic to filter out inconsistent rules before performing more expensive checks.
Furthermore, $\countAut{\aut}{t}$ can be efficiently computed for linear terms $t$, and hence it can be used as a preliminary test even if consistency could be decidable in polynomial but superlinear time.

\begin{lemma}
The function $\countAut{\aut}{t}$ can be computed in time $O(|\aut| \cdot |t|)$ over linear terms $t$.
\end{lemma}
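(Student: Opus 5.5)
The plan is a bottom-up dynamic program over the structure of $t$, computing $\countAut{\aut}{s}$ for every subterm $s$ of $t$. Linearity is what makes this work. Each variable occurs at most once, so the variable sets of the immediate subterms $t_1,\ldots,t_k$ of $t=f(t_1,\ldots,t_k)$ are pairwise disjoint. Hence a tuple of states assigned to the variables of $t$ splits uniquely into independent tuples for the variables of each $t_i$. (Variables of $t$ that occur nowhere contribute no factor, since the domain is over the variables actually occurring.)

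The steps are as follows.
\begin{enumerate}
\item \emph{Base cases.} For a variable $X$, we have $\deltaExt{\aut}(X)$ equal to the identity on $Q$, so $\countAut{\aut}{X}(q)=1$ for every $q$. For a constant $a$, $\countAut{\aut}{a}(q)=1$ if $q=\delta(a)$ and $0$ otherwise; the single empty tuple is counted.
\item \emph{Inductive case.} For $t=f(t_1,\ldots,t_k)$, disjointness of the variables gives $\deltaExt{\aut}(t)(\vec{q}^1,\ldots,\vec{q}^k)=\delta(\deltaExt{\aut}(t_1)(\vec{q}^1),\ldots,\deltaExt{\aut}(t_k)(\vec{q}^k),f)$. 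Grouping the tuples by the intermediate states $p_i=\deltaExt{\aut}(t_i)(\vec{q}^i)$ and using the product rule for independent choices yields
\[
\countAut{\aut}{t}(q)=\sum_{(p_1,\ldots,p_k)\,:\,\delta(p_1,\ldots,p_k,f)=q}\ \prod_{i=1}^k \countAut{\aut}{t_i}(p_i).
\]
Correctness follows by a straightforward structural induction.
\end{enumerate}

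For the complexity, evaluating this formula at a node labelled $f$ requires one pass over the transition table entries of $f$. Each entry $(p_1,\ldots,p_k)\mapsto q$ contributes one product of $k$ numbers to the accumulator at $q$. The cost at a node is therefore proportional to the number of $f$-transitions times the arity, which is $O(|\aut|)$ when $|\aut|$ is measured as the total size of the transition table. We also initialise an array of length $|Q|\le|\aut|$ at each node. Summing over the $|t|$ nodes gives $O(|\aut|\cdot|t|)$ arithmetic operations.

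The main obstacle is justifying the decomposition in the inductive case, namely that the preimage count factorises. This is precisely where linearity is used: with repeated variables, the choices for the $t_i$ would be coupled, and the counting would become a hard sum. A minor point to state explicitly is the cost model. Counts can reach $|Q|^{|\Vars(t)|}$, which needs $O(|t|\log|Q|)$ bits, so the bound is in terms of arithmetic operations (unit-cost arithmetic), as is standard.
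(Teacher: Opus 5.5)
Your proposal is correct and matches the paper's proof. Both compute the function bottom-up over the subterms of $t$, summing $\prod_i \countAut{\aut}{t_i}(p_i)$ over the $f$-transitions at each node at $O(|\aut|)$ cost per node; you additionally spell out the base cases, the way linearity makes the variable sets of the $t_i$ disjoint so the count factorises, and the unit-cost arithmetic model, all of which the paper leaves implicit.
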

\begin{proof}
We simply compute the function bottom-up. For $t = f(t_1, \ldots, t_k)$,
having $\countAut{\aut}{t_1}, \ldots, \countAut{\aut}{t_k}$ it suffices to iterate over all tuples $(q_1, \ldots, q_k) \in Q^k$, 
where $k = \arity(f)$, and sum 
$\countAut{\aut}{t_1}[q_1] \cdot \countAut{\aut}{t_2}[q_2] \cdot \ldots \cdot  \countAut{\aut}{t_k}[q_k]$ over all tuples.
For each node in the tree $t$ the computation takes  $O(|\aut|)$  operations, which gives us our bound.
\end{proof}

Observe that if one of the functions $\countAut{\aut}{l}$ or $\countAut{\aut}{r}$ is constant, then $\countAut{\aut}{l} = \countAut{\aut}{r}$ implies 
$\deltaExt{\aut}(l) = \deltaExt{\aut}(r)$.  In particular, for rules $l \to r$ such that $l$ is a linear term and $r$ is a ground term, 
we can decide consistency in linear time.

\begin{corollary}
The consistency of a rule $l \to r$ with $\lang(\aut)$ 
can be performed in time  $O((|l|+|r|)\cdot |\aut|)$
for rules $l \to r$ such that
$l$ is a linear term and $r$ is a ground term.
\end{corollary}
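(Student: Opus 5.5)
By Lemma~\ref{l:compute-inconsistency-witnesses}, the rule $l \to r$ is consistent with $\lang(\aut)$ if and only if $\deltaExt{\aut}(l) = \deltaExt{\aut}(r)$. Since $r$ is ground, $\deltaExt{\aut}(r)$ is the constant function with value $q_r = \deltaFinal(r)$; we view it as a function of the same arguments as $l$, whose variables it ignores. Computing $q_r$ is a single bottom-up run of $\aut$ on $r$, which costs $O(|r|\cdot|\aut|)$. So it suffices to decide in time $O(|l|\cdot|\aut|)$ whether $\deltaExt{\aut}(l)$ is the constant function $q_r$ on $Q^k$, where $X_1,\ldots,X_k$ are the variables of $l$.

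Because $l$ is linear, the preceding lemma lets us compute $\countAut{\aut}{l}$ in time $O(|\aut|\cdot|l|)$. The check is then:
\[
\countAut{\aut}{l}(q_r) = |Q|^k \quad\text{and}\quad \countAut{\aut}{l}(q) = 0 \text{ for all } q \neq q_r .
\]
This is correct because the preimages $\deltaExt{\aut}(l)^{-1}[\set{q}]$ partition $Q^k$. Hence the condition holds exactly when every tuple is mapped to $q_r$, i.e., when $\deltaExt{\aut}(l)$ is constantly $q_r$. This is the observation stated just before the corollary, that equal state-counting functions imply equal functions when one side is constant. Here we only need to compare $\countAut{\aut}{l}$ with $\countAut{\aut}{r}$, which is $|Q|^k$ at $q_r$ and $0$ elsewhere.

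The main obstacle is the cost of arithmetic: the counts can be as large as $|Q|^k$, with $k \le |l|$, so their bit length grows with $|l|$. To keep the stated bound under a unit-cost arithmetic model, we avoid exact counts and track only whether each count is zero. Concretely, we propagate bottom-up the set $S(t) \subseteq Q$ of states that are reachable as $\deltaExt{\aut}(t)(\vec q)$ for some $\vec q$. For a variable, $S(X)=Q$; for $t = f(t_1,\ldots,t_m)$, $S(t)$ is the set of $\delta(q_1,\ldots,q_m,f)$ with $q_i \in S(t_i)$. Linearity is exactly what makes this valid: distinct subterms share no variables, so every combination of reachable states is realizable simultaneously. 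Each node costs $O(|\aut|)$, since we scan the transitions of $f$. The rule is consistent if and only if $S(l) = \set{q_r}$, which gives total time $O((|l|+|r|)\cdot|\aut|)$.
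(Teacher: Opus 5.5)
Your proof is correct, and its first half is the paper's argument. Via Lemma~\ref{l:compute-inconsistency-witnesses}, consistency reduces to $\deltaExt{\aut}(l)$ being constantly $q_r=\deltaFinal(r)$. Since $r$ is ground, this can be read off $\countAut{\aut}{l}$, computed bottom-up by the preceding lemma. You also make explicit that the count function of $r$ must be taken over the same $k$ variables as $l$, so it equals $|Q|^k$ at $q_r$; the paper's remark leaves this implicit.

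Your final algorithm departs from the paper in one respect. You run the same bottom-up recursion over the Boolean semiring, propagating supports $S(t)$ instead of integer counts. This is justified because the preimages partition $Q^k$, so only the zero pattern of $\countAut{\aut}{l}$ matters.

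This gives a cleaner complexity claim. The paper's $O((|l|+|r|)\cdot|\aut|)$ bound tacitly assumes unit-cost arithmetic on counts as large as $|Q|^k$, i.e., numbers with up to $k\log|Q|$ bits where $k$ can be linear in $|l|$. Your version needs only bit-vector operations.

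The paper's counting formulation is the one that generalizes, though. For rules with both sides non-ground, supports discard the multiplicities that make $\countAut{\aut}{l}\neq\countAut{\aut}{r}$ a useful refutation heuristic, which is why the paper develops counts rather than supports.

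As with Lemma~\ref{l:compute-inconsistency-witnesses}, both arguments implicitly assume $\aut$ is minimal. This is needed for $S(X)=Q$ to be correct and for the ``iff''.
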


\subsection{Deciding positive and negative consistency}
\label{s:deciding-pos-consistency}

The notions of positive and negative consistency are dual. Observe that $\rew$ is positively consistent with $\lang$ 
if and only if $\rew$ is negatively consistent with the complement of $\lang$.
Therefore, we will focus on positive consistency as the results carry over to negative consistency straightforwardly 
(note that DFTA can be complemented via swapping accepting and non-accepting states).

First, we present a counterpart of Lemma~\ref{l:compute-inconsistency-witnesses} for positive consistency. 
For that, we need to define two order relations: one on states and another on states transformations that rank states from the least accepting to the most accepting.

\begin{definition}
Let $\aut$ be a DFTA. We define an order relation $\subsumedBy$ on states of $\aut$ such that
$q_1 \subsumedBy q_2$ if for all trees $t_1,t_2$ such that 
$\deltaFinal(t_1) = q_1$ and $\deltaFinal(t_2) = q_2$, for every context $c$ we have $c(t_1) \in \lang(\aut) \implies c(t_2) \in \lang(\aut)$.
\end{definition}

Given a DFTA $\aut$, the relation $\subsumedBy$ can be computed in $O(|\aut|^3)$ by the greatest fixed-point iteration. 
The algorithm starts with the relation $R = Q \times Q$. It removes pairs $(q_1, q_2)$ such that 
$q_1 \in \Qfin$ and $q_2 \notin \Qfin$. 
Then, iteratively,
for every context $c$ of height $1$ (which corresponds to a transition), if $(c(q_1),c(q_2)) \notin R$, 
the pair $(q_1, q_2)$ is removed from $R$.
The greatest fixed-point is reached after $|Q|^2$ iterations, while each iteration can be performed in $O(|\aut|)$ 
by checking all contexts of height $1$. 

We extend $\subsumedBy$ from states to state transformations:

\begin{definition}
Let $\aut$ be a DFTA and $k>0$.
We define an order relation $\subsumedByFunc$ on functions from $Q^k$ to $Q$ as follows.
For all $h_1, h_2 \colon Q^k \to Q$ we have $h_1 \subsumedByFunc h_2$ if and only if 
$h_1(\vec{q}) \subsumedBy h_2(\vec{q})$, for all $\vec{q} \in Q^k$.
\end{definition} 

Finally, we present the counterpart of Lemma~\ref{l:compute-inconsistency-witnesses} for positive consistency:

\begin{lemma}
\label{l:compute-inconsistency-witnesses-positive}
Let $\rew$ be a TRS over $\Sig$ and $\aut$ be a minimal DFTA over $\Sig$. 
The TRS $\rew$ is \emph{positively consistent} with $\lang(\aut)$ if and only if 
$\deltaExt{\aut}(l) \subsumedByFunc \deltaExt{\aut}(r)$, for each rule $l \rightarrow r \in \rew$.
\end{lemma}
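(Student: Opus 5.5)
We mirror the proof of Lemma~\ref{l:compute-inconsistency-witnesses}, replacing equality of states by the order $\subsumedBy$. Two facts about $\subsumedBy$ are used throughout. First, since $\aut$ is deterministic, $\deltaFinal(c(t))$ depends only on $c$ and $\deltaFinal(t)$. Hence $q_1 \subsumedBy q_2$ holds exactly when, for every context $c$ and some (equivalently, all) trees $t_1,t_2$ with $\deltaFinal(t_i)=q_i$, we have $c(t_1)\in\lang(\aut)\implies c(t_2)\in\lang(\aut)$. Second, every state of a minimal DFTA is reachable, so such trees always exist.

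\emph{Right to left.} Assume $\deltaExt{\aut}(l) \subsumedByFunc \deltaExt{\aut}(r)$ for every rule. By induction along the rewriting sequence, it suffices to treat a single step $s \rewritesOneStep t$. Write $s = c(\sigma(l))$ and $t = c(\sigma(r))$ for a context $c$, a rule $l\to r$, and a substitution $\sigma$, which we may take to be grounding after instantiating any variables of $r$ not occurring in $l$. Let $q_i = \deltaFinal(\sigma(X_i))$. Then $\deltaFinal(\sigma(l)) = \deltaExt{\aut}(l)(\vec q)$ and $\deltaFinal(\sigma(r)) = \deltaExt{\aut}(r)(\vec q)$, because the state of a term depends only on the states of the subtrees plugged into its variables. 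By assumption the first state is $\subsumedBy$ the second. Taking $t_1=\sigma(l)$, $t_2=\sigma(r)$ and the context $c$, the definition of $\subsumedBy$ gives $s\in\lang(\aut)\implies t\in\lang(\aut)$. Chaining the single steps yields positive consistency.

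\emph{Left to right.} Suppose some rule $l\to r$ and some $\vec q$ satisfy $\deltaExt{\aut}(l)(\vec q) \not\subsumedBy \deltaExt{\aut}(r)(\vec q)$. Choose trees $s_i$ with $\deltaFinal(s_i)=q_i$ and put $\sigma(X_i)=s_i$. Since $\subsumedBy$ depends only on the states, the failure of $\subsumedBy$ yields trees $t_1,t_2$ in those states and a context $c$ with $c(t_1)\in\lang(\aut)$ and $c(t_2)\notin\lang(\aut)$. Replacing $t_1$ by $\sigma(l)$ and $t_2$ by $\sigma(r)$ keeps the same states, so $c(\sigma(l))\in\lang(\aut)$ and $c(\sigma(r))\notin\lang(\aut)$. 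But $c(\sigma(l)) \rewritesOneStep c(\sigma(r))$, which violates positive consistency.

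The main subtlety is the state-independence of $\subsumedBy$: its definition quantifies over all trees in the given states, and we must make sure that a single witnessing pair of trees can be transported to the specific trees $\sigma(l),\sigma(r)$. Determinism of $\aut$ handles this, since acceptance of $c(t)$ depends only on $\deltaFinal(t)$. Minimality is used only for reachability, which guarantees that the trees $s_i$ exist; it is not needed for context distinguishability, because here $\subsumedBy$ is defined directly through contexts.
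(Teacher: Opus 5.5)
Your proposal is correct and follows essentially the paper's proof. The paper proves the direction from the rule condition to positive consistency by contraposition: it isolates one violating step $c(\sigma(l)) \rewritesOneStep c(\sigma(r))$ and reads off a failure of $\subsumedBy$ at the states of $\sigma$. You argue that same step directly and then chain the steps, and for the converse you use the identical construction of $\sigma$ from reachable trees together with a witnessing context.
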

\begin{proof}
The proof is virtually the same as for Lemma~\ref{l:compute-inconsistency-witnesses}. 
Assume that $\rew$ is not positively {consistent} with $\lang(\aut)$. 
Then, as in Lemma~\ref{l:compute-inconsistency-witnesses} there are $s,t$ such that 
$s \in \lang(\aut)$,  $t \notin \lang(\aut)$, and
$s \to t$ in one step. 
Therefore, there is a context $c$ and the grounding substitution $\sigma$ such that
$t_{i} = c[\sigma(l)]$ and $t_{i+1} = c[\sigma(r)]$. 
Let $q_1 = \deltaFinal(\sigma(l))$ and  $q_2 = \deltaFinal(\sigma(r))$. 
It follows that on the states that correspond to the grounding substitution $\sigma$ 
the relation $\deltaExt{\aut}(\sigma(l)) \subsumedByFunc  \deltaExt{\aut}(\sigma(r))$ 
does not hold.

Conversely, let $l \to r \in \rew$ be the rule such that 
the functions $\deltaExt{\aut}(l) \subsumedByFunc \deltaExt{\aut}(r)$ does not hold.  
Consider
$\vec{q}$ such that $\deltaExt{\aut}(l)(\vec{q}) \subsumedBy \deltaExt{\aut}(r)(\vec{q})$ does not hold.
Based on $\vec{q}$  we define the substitution $\sigma$ such that $\sigma(X_i)$ is the tree $s_i$ satisfying 
$\deltaFinal(s_i) = \vec{q}[i]$. Then,
$\deltaFinal(\sigma(l)) \subsumedBy \deltaFinal(\sigma(r))$ does not hold. 
Therefore, there is a context $c$ such that 
such that $c[\sigma(l)] \in \lang(\aut)$ and $c[\sigma(r)] \notin \lang(\aut)$. 
It follows that $l \to r \in \rew$ is not positively consistent with $\lang(\aut)$.
\end{proof}

Since $\subsumedBy$ can be computed in polynomial time, the condition $\deltaExt{\aut}(l) \subsumedByFunc \deltaExt{\aut}(r)$ can be falsified 
by finding an appropriate $\vec{q}$ and computing $q_1 = \deltaExt{\aut}(l)(\vec{q}), q_2 = \deltaExt{\aut}(r)(\vec{q})$ and finally checking 
that $q_1 \subsumedBy q_2$  does not hold. In consequence, checking positive consistency is in $\coNP$.

Observe that for rules $l \to r$ with a ground term $r \in \lang$, positive consistency and consistency coincide, i.e.,  
$l \to r$ is consistent with $\lang$ if and only if it is positively consistent. 
If $r \notin \lang$, we can instead take the complement of $\lang$ and reduce to the previous case. 
Therefore, for such rules the complexity of checking positive consistency is the same as checking (full) consistency: 
if $r$ is a ground term, checking consistency of $l \to r$ with a regular tree language $\lang$ 
is $\coNP$-complete in general, and it is decidable in polynomial time for rules $l \to r$ with $l$ being a linear term.
When $l,r$ are both linear terms, the complexity of checking consistency of $l \to r$ is left open. 
For positive consistency, we can show that it is $\coNP$-complete. It suffices to show hardness:

\begin{lemma}
Checking positive consistency of a rule $l \to r$ with the language of a given DFTA is $\coNP$-hard for rules $l \to r$ such that 
$l,r$ are linear terms.
\end{lemma}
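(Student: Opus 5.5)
The plan is to reduce from the complement of $3$-SAT, equivalently from DNF non-tautology as in Lemma~\ref{l:consistency-np-hard}. Non-linearity was essential in that proof, because several occurrences of one propositional variable had to receive the same truth value. With linear terms this coupling cannot be placed inside $l$ alone. The idea is to move it into the interplay between $l$ and $r$: each variable $X$ occurs once in $l$ and once in $r$. By Lemma~\ref{l:compute-inconsistency-witnesses-positive} it suffices to build a DFTA $\aut$ and linear terms $l,r$ such that some $\vec{q}$ satisfies $\deltaExt{\aut}(l)(\vec{q}) \not\subsumedBy \deltaExt{\aut}(r)(\vec{q})$ if and only if $\varphi$ has a satisfying assignment.

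\textbf{Construction of $l$.} Given a CNF $\varphi$ with clauses $C_1,\ldots,C_m$ over propositions $x_1,\ldots,x_n$, give every literal occurrence its own variable $X_{j,p}$. Let $l$ be the linear term $t_\varphi$ built from $\land,\lor,\lnot$, with each negative literal written as $\lnot(X_{j,p})$. On inputs in $\{q_0,q_1\}$ the DFTA evaluates Boolean expressions as $\aut_{\textrm{eval}}$ does.

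\textbf{Construction of $r$.} For each proposition $x_j$, let $r$ contain a chain $\textsf{same}(X_{j,1},\textsf{same}(X_{j,2},\ldots))$, where a variable from a negated occurrence is wrapped in $\lnot$ so that literal values are compared as proposition values. The states of $\textsf{same}$ are $q_0,q_1$, plus a state $\textsf{mix}$ reached when two inputs differ. The chains are joined by a binary symbol whose result is a state $\textsf{bad}$ as soon as some chain is $\textsf{mix}$, and a state $\textsf{ok}$ otherwise. Every variable then occurs exactly once in $l$ and once in $r$, so both terms are linear. I will reuse $\land,\lor,\lnot$ where convenient and add fresh symbols only where needed.

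\textbf{The order on states.} The ordering must make the only violations come from consistent satisfying assignments. I intend to take $\textsf{bad}$ to be an accepting sink, so every state is $\subsumedBy \textsf{bad}$. I take $\textsf{ok}$ to be a non-accepting sink, so $q \subsumedBy \textsf{ok}$ holds exactly for states from which no context leads to acceptance. I make $q_0$ such a dead state and $q_1$ accepting. Then, on a consistent assignment, $l$ evaluates to $q_1$ exactly when $\varphi$ is satisfied, and $q_1 \not\subsumedBy \textsf{ok}$ gives a violation. On a falsifying assignment $l$ evaluates to $q_0 \subsumedBy \textsf{ok}$, and on an inconsistent one $r$ evaluates to $\textsf{bad}$, so neither gives a violation.

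\textbf{Main obstacle.} The hard step is making this ordering argument airtight. First, $q_0$ must still be distinguishable from $\textsf{ok}$ so that the DFTA is minimal. Second, the intermediate Boolean states must not let $q_0$ reach acceptance through other symbols. I would fix this with a dedicated unary root marker $\textsf{root}$ placed over $l$ and over $r$: it sends $q_1$ to an accepting state and $q_0$ to a rejecting sink. It remains to check the remaining states as inputs, namely $\textsf{mix}$, $\textsf{ok}$, $\textsf{bad}$, sinks and internal states. All such junk inputs will be sent to $\textsf{bad}$ inside $r$, and to arbitrary states inside $l$, since everything is $\subsumedBy \textsf{bad}$. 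With that in place, the remaining verification is a routine case analysis over $\vec{q}$. The DFTA has constant size and $l,r$ have size linear in $\varphi$, so the reduction is polynomial.
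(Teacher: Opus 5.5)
Your reduction takes a different route from the paper's, and the core idea is sound. You put the formula in $l$ and the occurrence-consistency check in $r$. Inconsistent assignments go to a state above everything in $\subsumedBy$. A root marker makes the compared states ordered as intended, even though $\lnot$ makes $q_0$ and $q_1$ incomparable. As written, however, the construction fails in two places.

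\textbf{Double negation.} Since $l$ already encodes a negative literal as $\lnot(X_{j,p})$, the variable $X_{j,p}$ carries the value of $x_j$. Wrapping it in $\lnot$ again inside the $\textsf{same}$-chain forces positive and negative occurrences to take opposite proposition values. On every assignment that $r$ treats as consistent, $l$ then evaluates $\varphi$ with all negations erased, which the all-true choice satisfies. For example, $\varphi = x_1 \land \lnot x_1$ with $X_{1,1}=q_1$, $X_{1,2}=q_0$ gives a violation. So every instance is mapped to ``not positively consistent''. The variables must appear unwrapped in $r$.

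\textbf{The sink design.} With binary symbols, $\textsf{bad}$ and $\textsf{ok}$ cannot both be absorbing. If $\textsf{ok}$ were absorbing, nested joins (call the join symbol $J$) would lose information, since $J(\textsf{mix},\textsf{ok})$ would be $\textsf{ok}$. Your claim that all junk in $r$ goes to $\textsf{bad}$ would also fail for the input $\textsf{ok}$. None of this, nor deadness of $q_0$, is needed:
\begin{itemize}
\item make $\textsf{bad}$ the only absorbing state and send every unlisted transition to it;
\item let $J$ carry a non-accepting ``no mix so far'' value $g$;
\item let $\textsf{root}$ map both $q_0$ and $g$ to one non-accepting state $O$, and $q_1$ to an accepting state.
\end{itemize}
Then $\textsf{root}(q_0) \subsumedBy O$ holds trivially, and the empty context shows that $\textsf{root}(q_1)$ is not below $O$.

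The case analysis you call routine still has one real constraint. In $r$, a variable may only sit at argument positions that accept nothing but $q_0,q_1$, or $\textsf{mix}$ at the innermost $\textsf{same}$, which is harmless. Otherwise you get a spurious violation: a bare one-occurrence variable in a join position that also accepts $g$, assigned $g$, makes $r$ evaluate to $O$ while $l$ evaluates to $\textsf{bad}$. Your minimality worry is unnecessary. Positive consistency depends only on the language and $\subsumedBy$ is semantic, so you can apply the lemma to the minimized automaton.

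\textbf{Comparison with the paper.} The paper reduces from tautology of 3-DNF-3 formulas. It takes $l=t_\xi$, a conjunction of ternary \textsf{all-eq} atoms over renamed occurrences, and $r=t_\psi$, the renamed formula, both evaluated in the two-state $\aut_{\textrm{eval}}$. Inconsistency thus sends $l$ to false (bottom) rather than $r$ to a top state. The bounded-occurrence restriction via Tovey keeps \textsf{all-eq} of fixed arity; your binary $\textsf{same}$-chains remove that restriction, at the cost of a larger constant-size automaton.

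The paper's argument is shorter, but it only compares $t_\xi$ and $t_\psi$ at the root. In $\aut_{\textrm{eval}}$ the context $\lnot(X)$ refutes $q_0 \subsumedBy q_1$, and the empty context refutes $q_1 \subsumedBy q_0$. So $\subsumedBy$ is the identity, and positive consistency collapses to $\deltaExt{\aut}(t_\xi)=\deltaExt{\aut}(t_\psi)$. That fails for a tautology as soon as some inconsistent assignment makes $\psi$ true. A root marker like yours, there paired with an absorbing rejecting state as bottom, is exactly what that argument also needs.
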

\begin{proof}
\newcommand{\allEq}{\textsf{all-eq}}
As in the proof of Lemma~\ref{l:consistency-np-hard}, we show a reduction from the tautology problem.
We use a slight modification of the tautology problem. 
We say that a formula is in 3-CNF-3  if it is in a CNF, all clauses have at most $3$ literals and each variable has at most $3$ occurrences.
Analogously, a formula is in  3-DNF-3, if it is in a DNF, all disjuncts have at most $3$ literals and each variable has at most $3$ occurrences.
The negation of a 3-CNF-3 formula is a 3-DNF-3 formula.
The satisfiability problem for 3-CNF-3 formulas is  \NP-complete~\cite{DBLP:journals/dam/Tovey84}.
It follows that its complement, the tautology problem over 3-DNF-3 formulas is \coNP-complete.
Our reduction is from the tautology problem over 3-DNF-3 formulas.

Let $\Sig$ consists of a ternary $\allEq$, binary $\land, \lor$, unary $\lnot$ and constants $\top, \bot$.
Let $\aut_{\textrm{eval}}$ be a DFTA that evaluates logical expressions, where $\allEq$ is evaluated to true all of its 3 arguments have the same truth value.
Given a propositional formula  $\varphi$ in 3-DNF-3, 
we  translate it to $\psi$, in which variable in occurrences are distinct, i.e., 
a variable $x$ that occurs three times is substituted with $x_1, x_2, x_3$ in subsequent occurrences.
Then, let $t_{\psi}$ be a term over $\Sig$ corresponding to $\psi$.

Let $X$ be the set of variables from $\varphi$ and $Y$ be the set of variables from $\psi$. 
We build a formula $\xi$ over $\Sig$, which is true if all variables from $Y$, 
which correspond to the same variable from $X$ have the same logical value.
This formula is a conjunction of expression $\allEq(x_1, x_2, x_3)$, where every variable from $Y$ occurs exactly once.
Finally, observe that $t_{\xi} \to t_{\psi}$ is positively consistent with $\lang(\aut_{\textrm{eval}})$ if and only if 
$t_{\xi} \in \lang(\aut_{\textrm{eval}})$ implies $\psi \in \lang(\aut_{\textrm{eval}})$, if for every substitution $\sigma$.
This, in turn, holds exactly when for every variable assignment on $Y$, 
if this assignment comes from an assignment on $X$, then $\psi$ is true.
The latter holds if and only if $\varphi$ is a tautology.
\end{proof}

Still, we can utilize the heuristic based on $\countAut{\aut}{t}$, which was proposed above for consistency. 
We need to adapt the condition to positive consistency in the following way. 
We define the cumulative state-counting function $\countAutCumm{\aut}{t}$ as 
$\countAutCumm{\aut}{t}(q) = \sum_{q \subsumedBy q'} \countAut{\aut}{t}(q')$, i.e., it sums the multiplicity of 
each state $q'$ that is greater in the $\subsumedBy$ order.    

\begin{lemma}
For terms $l,r$, if $\deltaExt{\aut}(l) \subsumedByFunc \deltaExt{\aut}(r)$, then 
 $\countAut{\aut}{l}(q) \leq \countAutCumm{\aut}{r}(q)$, for every state $q$.
\end{lemma}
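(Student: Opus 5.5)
The plan is a direct counting argument: build an injection from the preimage counted by $\countAut{\aut}{l}(q)$ into a disjoint union of preimages counted by $\countAut{\aut}{r}$. Throughout, I regard $\deltaExt{\aut}(l)$ and $\deltaExt{\aut}(r)$ as functions on the same domain $Q^k$, where $X_1, \ldots, X_k$ are all variables occurring in $l$ or $r$. This is the convention under which $\subsumedByFunc$ compares them at all. If a variable is missing from one side, that function simply ignores the corresponding argument. The state-counting functions are taken over this common domain $Q^k$.

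Fix a state $q$ and let $P_l(q) = \set{\vec{q} \in Q^k \mid \deltaExt{\aut}(l)(\vec{q}) = q}$. By definition, $|P_l(q)| = \countAut{\aut}{l}(q)$. Similarly, for each state $q'$ let $P_r(q') = \set{\vec{q} \in Q^k \mid \deltaExt{\aut}(r)(\vec{q}) = q'}$.

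The key step is to take $\vec{q} \in P_l(q)$ and apply the hypothesis $\deltaExt{\aut}(l) \subsumedByFunc \deltaExt{\aut}(r)$ pointwise. This gives $q = \deltaExt{\aut}(l)(\vec{q}) \subsumedBy \deltaExt{\aut}(r)(\vec{q})$. So if $q' = \deltaExt{\aut}(r)(\vec{q})$, then $q \subsumedBy q'$ and $\vec{q} \in P_r(q')$. Hence
\[ P_l(q) \subseteq \bigcup_{q' \colon q \subsumedBy q'} P_r(q'). \]

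The sets $P_r(q')$ for distinct $q'$ are pairwise disjoint, since $\deltaExt{\aut}(r)$ is a function. Therefore the cardinality of the union equals $\sum_{q \subsumedBy q'} \countAut{\aut}{r}(q') = \countAutCumm{\aut}{r}(q)$. Comparing cardinalities yields $\countAut{\aut}{l}(q) \le \countAutCumm{\aut}{r}(q)$.

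There is no real obstacle here. The only point needing care is the bookkeeping about domains: both counts must be taken over the same set $Q^k$ of argument tuples, so that the identity map on tuples serves as the injection. I would state this convention explicitly at the start of the proof. I would also note that disjointness of the $P_r(q')$ is what lets the sum in $\countAutCumm{\aut}{r}$ equal the size of the union, rather than only bound it.
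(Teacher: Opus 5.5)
Your proof is correct. The paper states this lemma without proof, and your argument is exactly the justification it leaves implicit: applying $\subsumedByFunc$ pointwise gives the inclusion $P_l(q) \subseteq \bigcup_{q \subsumedBy q'} P_r(q')$, and the fibres of $\deltaExt{\aut}(r)$ are disjoint.

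Your insistence on counting both sides over a common domain $Q^k$ is genuinely needed, not pedantic. If each side were counted over its own variables, the inequality can fail. For example, with $l = f(X,Y)$ and $r = X$, the left count ranges over $|Q|^2$ tuples while the right ranges over only $|Q|$.
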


Note that having the function $\countAut{\aut}{r}(q_2)$ and the relation $\subsumedBy$, we can compute  $\countAutCumm{\aut}{t}(q)$ in time $O(|Q|^2)$ from definition or
in $O(|\subsumedByStrict|\cdot|Q|)$, where $\subsumedByStrict$ is the strict variant of the order $\subsumedBy$ and
$|\subsumedByStrict|$ is the number of tuples in $\subsumedByStrict$. 
The latter complexity is achieved by computing the \emph{transitive reduction} $\succBy$~\cite{transitiveReduction} of 
$\subsumedByStrict$ in $O(|\subsumedByStrict|\cdot|Q|)$, i.e., the transitive closure of $\succBy$ is $\subsumedByStrict$. 
Then the graph $(Q,\succBy)$ is sorted topologically in $O(|\succBy|\cdot|Q|)$ and  
the summation is performed iteratively over the immediate successors only in $O(|\succBy|\cdot|Q|)$. 
Since $\succBy \subseteq \subsumedByStrict$, the whole computation is in $O(|\subsumedByStrict|\cdot|Q|)$.

\subsection{Membership queries}
\label{s:membership-queries}
\newcommand{\rewMem}{\rew_{\textrm{mem}}}

Typical implementations of the \lStar-algorithm store answers to membership queries in a \emph{cache} to ensure that each query is unique. 
We extend this idea to membership queries modulo term rewriting.

First, for the advice TRS  $\rew_{=}$ (which is fully consistent with the target language), we select a convergent subset $\rewMem$.
Since the TRS $\rewMem$ is convergent, every term has a unique normal form. 

The algorithm maintains a {cache}, which is a dictionary mapping each tree's normal form $\normalFormTRS{\rewMem}{t}$ to its membership result. 
Before asking a query for a tree $s$, the algorithm computes the normal form of $s$ and checks the cache.  
If present, it fetches the stored answer without querying the oracle.
Consistency of $\rew$ with the target language $\lang$ implies that 
if  $\normalFormTRS{\rewMem}{s} = \normalFormTRS{\rewMem}{t}$, then $s \in \lang \Iff t \in \lang$.

Otherwise, if $\normalFormTRS{\rewMem}{s}$ is not in the cache, it asks the oracle the membership query and stores $\normalFormTRS{\rewMem}{s}$ 
along with the answer in the cache.

The TRSs $\rew_{+}, \rew_{-}$ can be used to infer answers to membership queries as well, but the complexity of inference makes this infeasible~\cite{ecai25}.

\section{Examples of advice rewrite rules}
\label{sec:examples}
We present examples of TRS that express interesting properties of tree languages.
While these properties originate in algebra, we discuss their relevance to knowledge representation, structured data, 
and Description Logic (DL)~\cite{DLbook}.  

\subsection{Families of TRS}

We consider the following types of TRSs: \emph{associativity},  \emph{commutativity},  
variants of \emph{distributivity}, and \emph{cancellation} rules. The applications of these rules are discussed in the context of
full consistency. We elaborate on these TRSs below.

\paragraph{Associativity.} 
While we have considered ranked trees, unranked trees with symbols of variable arities are prevalent in structured data.
For example, the HTML \texttt{<body>} tag has an arbitrary number of immediate successors, and 
dictionaries in JSON have arbitrarily 
many entries.
Unranked symbols can be encoded with associative binary symbols; the associativity rule for $f$: $f(X,f(Y,Z)) \to f(f(X,Y),Z)$ 
implies that the shape of a tree labelled with $f$ is irrelevant and hence all $f$ symbols can be contracted into 
a single high-arity symbol.
Furthermore, the basic DL operators $\DLand$ and $\DLor$ are associative.

\paragraph{Commutativity.}	
Commutativity of $f$ is expressed with the rule $f(X,Y) \to f(Y,X)$.
This can be more precisely described as \emph{horizontal commutativity}, indicating that  
sibling subtrees can be swapped without changing the tree's value.
JSON dictionaries and DL operators $\DLand$ and $\DLor$ are (horizontally) commutative. 
Furthermore, unranked and unordered trees~\cite{DBLP:conf/rta/BonevaT05} can be modeled with binary operators that are associative and (horizontally) commutative.

\paragraph{Distributivity.}
The standard distributive properties for binary symbols $f,g$ are:
(1)~$f$ is \emph{left-distributive} over $g$: $f(X,g(Y,Z)) \to g(f(X,Y), f(X,Z))$, and 
(2)~$f$ is \emph{right-distributive} over $g$: $f(g(X,Y),Z) \to g(f(X,Z), f(Y,Z))$.
DL operators $\DLand$ and $\DLor$ are left- and right-distributive one over another, i.e., both $\DLand$ over $\DLor$ and 
$\DLor$ over $\DLand$.

\paragraph{Variants of distributivity.}
Distributivity can be generalized beyond binary symbols. 
For a binary symbol $f$ and an unary symbol $g$ the rule $g(f(X,Y)) \to f(g(X),g(Y))$ is a variant of  distributivity.
Similarly, for unary $f,g$ the rule $g(f(X)) \to f(g(X))$ can be considered as a variant of distributivity. 
This rule states that the relative order of $f$ and $g$ along any root-to-leaf path is irrelevant, i.e., 
$f$ and $g$ commute along any path.
Consequently, variants of distributivity can be regarded as \emph{vertical commutativity}.
Typical examples of vertical commutativity are independent operations such as bold and italic text tags in HTML, which can be applied in any order, or more generally
an operator $g$ that is applied to all leaves of the subtree and hence $g(f(X,Y)) \to f(g(X),g(Y))$ holds.

\paragraph{Context cancellation rules.}
The general form of a \emph{context cancellation rule} is $c(t) \to c(X)$, where $c$ is a context and $t$ is a term.
The idempotency property of $f$ expressed with $f(f(X)) \to f(X)$ is a context cancellation rule. 
Similarly, there are various cancellation rules in DL: $X \DLand X \to X, X \DLor X \to X$ and $(X \DLand Y) \DLor X \to X$.

\section{Experiments}
In this section we discuss our experimental setup and the obtained results. 
We have considered associativity, distributivity and commutativity advice with the full consistency notion.
First, we discuss multiple learning settings that we have implemented and the research questions that have been addressed in our study.
Then, we discuss each advice type separately. For associativity and 
distributivity we discuss generation of datasets and obtained results, while for commutativity we briefly explain why there has been no improvement with our approach.

\subsection{Experimental Setup}
We have implemented the whole setup for learning tree languages in C\texttt{++}. 
It consists of the \lStar algorithm adapted to bottom-up DFTA and the oracle answering membership and equivalence queries based on a given DFTA.
The \lStar algorithm has been implemented in several variants: the classical one and $3$ variants with advice.
Our implementation accepts any set of rewrite rules as input, though our evaluation focuses on the TRS classes mentioned above.
We have implemented two algorithms for answering equivalences queries: the exact algorithm based on reachability in the product automaton, and
the approximate algorithm based on checking conformance of a given DFTA on $N$ randomly generated trees.
The value $N$ is a parameter of the algorithm, which has been estimated experimentally to achieve high probability of correct answers on automata used in the evaluation.
The cost of a single equivalence query is measured in \emph{tokens}, where a token is a node of a tree, and the number of tokens refers to the total number of nodes in trees used in the approximation of equivalence.
We have studied the approximate variant to simulate a scenario, in which equivalence queries cannot be directly implemented~\cite{ModelLearning,learningEL,bhattamishraautomata,vazquezchanlatte2025ll}. 
For instance, if the oracle is implemented based on running a computer program or querying an LLM, the  answers to equivalence queries are approximated with a large number of 
membership queries.
In summary, we have implemented the following learning settings:
\begin{enumerate}[(S1)]
\item The standard $\lStar$ algorithm for learning tree automata (our baseline) with the exact algorithm answering equivalence queries.
\label{stdSetting} 
\item
The \lStar algorithm with advice, in which the learning algorithm can use an advice term rewriting system to infer answers to equivalence queries. 
The consistency checking algorithm is the exact algorithm presented in Section~\ref{s:deciding-consistency}, which is exponential in the number of variables in the TRS.
\label{adviceSetting}
\item The approximate \lStar algorithm, in which the oracle implements the approximate algorithm for answering equivalence queries based on testing conformance of both automata on random trees. The learning algorithm is
the standard $\lStar$ algorithm.
\label{approxSetting}
\item The approximate \lStar algorithm with advice. The setting as in (S\ref{approxSetting}), but the learning algorithm can use an advice term rewriting system, with the exact algorithm for checking consistency (Section~\ref{s:deciding-consistency}).
\label{approxSettingAdvice}
\item The approximate \lStar with advice and using random testing in checking consistency with a TRS. 
This is as (S\ref{approxSettingAdvice}) except that the consistency checks are approximated with a random test 
similarly to approximate equivalence queries. 
\label{approxSettingAdviceApprox} 
\item  The approximate \lStar with advice and using counting heuristic in checking consistency with TRS. 
The setting as in (S\ref{approxSettingAdvice}), but before searching for counterexamples, the consistency algorithm runs  
the state-counting-based heuristic (Section~\ref{s:deciding-consistency}). Only if the heuristic confirms that the language of the candidate automaton is not consistent, the exact algorithm is executed.
\label{approxSettingCounting}
\end{enumerate}
 
In the experiments we have addressed the following research questions regarding query complexity, runtime complexity, accuracy and dependence on the cost of membership queries.

\begin{enumerate}[(Q1)]
\item (Query complexity) What is the impact of the advice mechanism on the number of equivalence queries posed to the oracle? 
\item (Time complexity) What is the impact of the advice mechanism on the overall learning runtime including runtime of the oracle.  
\item (Accuracy) In the approximate settings (S\ref{approxSetting}) --- (S\ref{approxSettingCounting}) the learning algorithm can return a DFTA that does not recognize the target language due to the approximate nature of equivalence tests.
The \emph{accuracy} refers to the ratio of correct DFTA returned by the learner.
What is the impact of the advice mechanism on accuracy of learning in the approximate settings?
\item (Break-even point) If the oracle has direct access to a DFTA, the exact equivalence test is more efficient than the approximate one. 
However, in many applications, the oracle has no direct access to the DFTA. Moreover, the membership queries are in fact more expensive than just evaluating a tree with respect to a given DFTA. 
Therefore, we study how the cost of membership queries influences the runtime of the learning algorithm without and with advice? 
What is the minimal cost of membership queries under which using advice is more efficient than the algorithm with no advice?
\end{enumerate}

In the following, we address the above questions for associativity and distributivity advice.

\subsection{Associativity}

We have evaluated the impact of the associativity advice  
$f(X,f(Y,Z)) \to f(f(X,Y),Z)$ 
on the learning process; we have addressed all research questions on randomly generated automata. 
To ensure that the language of a  randomly generated DFTA satisfies associativity, we generate the DFTA as follows. 

\paragraph{Generating DFTA satisfying associativity.}
Let $\Sig$ be a signature consisting of a binary symbol $f$ and constant symbols $\Sigma$.
The \emph{yield} of a tree $t$ over $\Sig$ is the sequence of constants $\Sigma$ in leaves read from left to right.
For a regular language $\Lwords$ over $\Sigma$, 
we define $\treeLangFromWords{\Lwords}$ as the tree language consisting of all trees with the yield from $\Lwords$. 
The membership of a tree in $\treeLangFromWords{\Lwords}$ does not depend on its shape, 
which can be expressed by stating that $f$ is associative. Intuitively, a tree $t$ over $\Sig$ can be considered as 
a tree of height one with a single variable-arity symbol $\mathbf{f}$ in the head and constants from $\Sigma$ as arguments, which
is effectively an encoding of a sequence over constant symbols $\Sigma$.
We show that tree languages satisfying associativity over $\Sig$ are exactly tree language obtained from regular languages via $\Lwords \mapsto \treeLangFromWords{\Lwords}$:
 
\begin{lemma}
\label{l:associativity}
Let  $\lang$ be a regular tree language over a binary symbol $f$ and constants $\Sigma$.
The rule  $f(X,f(Y,Z)) \to f(f(X,Y),Z)$ is consistent with $\lang$ if and only if
$\lang = \treeLangFromWords{\Lwords}$ for some regular language $\Lwords$ over $\Sigma$.
\end{lemma}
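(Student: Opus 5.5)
The plan is to prove the two directions separately, with the yield of a tree as the key invariant. For a tree $t$ over $\Sig$, write $\mathrm{yield}(t)\in\Sigma^+$ for its yield; every tree has a nonempty yield.

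\emph{($\Leftarrow$)} Suppose $\lang = \treeLangFromWords{\Lwords}$. A single rewrite step replaces a subtree $f(s_1,f(s_2,s_3))$ by $f(f(s_1,s_2),s_3)$ inside some context $c$. Both subtrees have yield $\mathrm{yield}(s_1)\mathrm{yield}(s_2)\mathrm{yield}(s_3)$. Since the yield of $c(s)$ is obtained by inserting the yield of $s$ at the position of the hole, it depends only on the yield of $s$. Hence one-step rewriting preserves the yield, and by induction on the length of the rewrite sequence so does $\rewrites$. Membership in $\treeLangFromWords{\Lwords}$ depends only on the yield, so $s \rewrites t$ implies $s\in\lang \iff t\in\lang$. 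Regularity of $\Lwords$ plays no role here.

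\emph{($\Rightarrow$)} Suppose the rule is consistent with the regular language $\lang$. Define $\Lwords = \set{\mathrm{yield}(t) \mid t\in\lang}$. The main step is to show that any two trees with the same yield are connected by rewriting, so that $\lang$ is closed under equal yields and therefore $\lang=\treeLangFromWords{\Lwords}$. For this, fix a canonical form for each word $w=a_1\cdots a_n$: the left comb $\mathrm{comb}(w)=f(\cdots f(f(a_1,a_2),a_3)\cdots,a_n)$. We then prove by induction on $|t|$ that $t \rewrites \mathrm{comb}(\mathrm{yield}(t))$.
\begin{itemize}
\item For a constant $t$ there is nothing to prove.
\item For $t=f(t_1,t_2)$, first rewrite both subtrees to combs using the induction hypothesis (the closure rule (ii) lets us rewrite inside arguments). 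This gives $f(\mathrm{comb}(u),\mathrm{comb}(v))$.
\item Then run an inner induction on $|v|$. If $v$ is a letter, $f(\mathrm{comb}(u),\mathrm{comb}(v))$ is already $\mathrm{comb}(uv)$. Otherwise $\mathrm{comb}(v)=f(\mathrm{comb}(v'),b)$, and one application of the rule at the root gives $f(f(\mathrm{comb}(u),\mathrm{comb}(v')),b)$. Applying the inner hypothesis to the left argument yields $f(\mathrm{comb}(uv'),b)=\mathrm{comb}(uv)$.
\end{itemize}

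Now let $s,t$ have the same yield $w$. Both rewrite to $\mathrm{comb}(w)$, so consistency gives $s\in\lang \iff \mathrm{comb}(w)\in\lang \iff t\in\lang$. Note that only the forward direction of rewriting is used, so no symmetry of $\rewrites$ is needed.

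It remains to show that $\Lwords$ is regular. We have $w\in\Lwords$ iff $\mathrm{comb}(w)\in\lang$. Take a DFTA $\aut$ for $\lang$ and build a word DFA that simulates $\aut$ on the comb. Its states are $Q$ together with an initial state $\iota$. The transitions are $\iota \xrightarrow{a} \delta(a)$ and $q\xrightarrow{a}\delta(q,\delta(a),f)$, and its accepting states are $\Qfin$. The empty word is excluded automatically, since $\iota$ is not accepting. This DFA accepts exactly $\Lwords$.

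The main obstacle is the normalization step showing that every tree rewrites to the left comb of its yield, and in particular getting the nested induction right. The other parts are routine.
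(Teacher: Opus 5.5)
Your proof is correct and follows essentially the same route as the paper. Both directions rest on yield preservation, and every tree is rewritten to the left comb of its yield; you spell out the nested induction that the paper only sketches. Regularity of the yield language comes, as in the paper, from running a DFTA on left combs as a word DFA with a fresh initial state. The only cosmetic difference is that the paper first intersects $\lang$ with the left-skewed trees, which your direct simulation of $\aut$ on $\mathrm{comb}(w)$ makes unnecessary.
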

 \begin{proof}   
 \newcommand{\autD}{\mathcal{D}}
Observe that if $\lang = \treeLangFromWords{\Lwords}$ for some regular language $\Lwords$ over $\Sigma$
then it is consistent with  $f(X,f(Y,Z)) \to f(f(X,Y),Z)$. Indeed, 
for any trees $t_1, t_2, t_3$ the trees $f(t_1, f(t_2, t_3))$ and $f(f(t_1, t_2), t_3)$ have the same yield
and hence applying $f(X,f(Y,Z)) \to f(f(X,Y),Z)$ does not change the membership (or not) to $\treeLangFromWords{\Lwords}$.

Now, assume that the rule  $f(X,f(Y,Z)) \to f(f(X,Y),Z)$ is consistent with $\lang$ over $\Sig$.
We can show by induction on the size of the tree, that for any two trees $t_1, t_2$ with the same yield, 
there  exists a tree $t_3$ with the same yield as $t_1, t_2$ such that
$t_1 \rewrites t_3$ and $t_2 \rewrites t_3$. That tree $t_3$ satisfies the property that for every node labelled with $f$, the right subtree does not have $f$ in the root.
Therefore, for every word $w \in \Sigma^*$, the language
$\lang$ either contains all trees with the yield $w$ or none of such trees. 

In general, the word language of yields of trees from a given regular tree language is context-free, and 
any context-free language can be presented as a set of yields of some regular tree language. 
However, for a regular tree language satisfying associativity, the language of yields has to be regular. 

To see this, observe that if the language $\lang$ contains a tree with a yield $w \in \Sigma^*$, 
then it also contains a left skewed tree (exactly as $t_3$ above)
\[ f(f(\ldots f(f(w[1],w[2]),w[3])\ldots), w[n]) \]
where $n = |w|$.
We can intersect $\lang$ with the regular tree language of left skewed trees and obtain a regular tree language $\lang^{LS}$ 
with the same set of yields as $\lang$.
Then, on left skewed trees, a (bottom-up) DFTA works as a DFA that processes the word from left to right. 
More precisely, let $\aut_{LS}$ be a DFTA recognizing the tree language $\lang^{LS}$.
We can transform $\aut_{LS}$ to a DFA recognizing the word language of yields of $\lang^{LS}$.
Let $q_{1}, \ldots, q_{k}$ be states assigned by the DFTA $\aut_{LS}$ to constants $a_1, \ldots, a_k $ from $\Sigma$.
We define a DFA $\autD$ whose set of states consists of all states of $\aut_{LS}$ and a fresh initial state $q_0$, which is not a state of $\aut_{LS}$.
For every constant $a_i \in \Sigma$, we define $\delta_{\autD}(q_0,a_i) = q_i$. 
Next, for all states $s \neq q_0$ and constants $a_i \in \Sigma$, we define the transition of $\autD$ as $\delta_{\autD}(s,a_i) = \delta_{LS}(s,q_{i},f)$, where $\delta_{LS}$ is the transition function of $\aut_{LS}$. 
Accepting state of $\aut_{LS}$ and $\autD$ are the same.
Observe that the DFA $\autD$ recognizes the language of all yields of $\lang^{LS}$, which is equal to the language of all yields of $\lang$.
\end{proof}

\paragraph{The dataset.}
Based on Lemma~\ref{l:associativity},
 we have generated $935$ random DFA over the alphabets with $2$--$5$ letters, 
 with $2$--$16$ states, and converted each of them to a DFTA. 
These DFTA are guaranteed to satisfy associativity and their number of states ranged between $20$ and $200$. 
While the conversion of DFTA to DFA is linear, the conversion of DFA to DFTA incurs exponential blow-up as a DFTA has to process leaves in various orders depending on the shape of the tree (e.g., left-leaning vs. right-leaning).
For instance, for $w = aabb$, a DFTA, which works bottom-up on the tree $f(f(f(a,a),b),b)$, processes leaves left-to-right, 
while a DFTA on the tree $f(a,f(a,f(b,b)))$ processes leaves right-to-left. 
We have removed $94$ trivial cases in which the $\lStar$ algorithm finished learning with at most $2$ equivalence queries, leaving $841$ automata.

\paragraph{Results in the exact setting.}
We have compared settings (S\ref{stdSetting}) and (S\ref{adviceSetting}) regarding query and time complexity.
We have observed reduction of equivalence queries between 
$29\%$ (from $7$ to $5$ queries) to $96\%$ (from $49$
to $2$ queries) with an average reduction of $78.5\%$.
The median reduction is $83\%$.
In the absolute terms, this corresponds to a reduction from $12$ to $2.6$  equivalence queries on average.
The average runtime has increased slightly from $6.9$ seconds without advice to $7.02$ seconds with advice.
The complete notebook with the analysis of results is available at~\cite{experimentsRepo}.

\paragraph{Results in the approximate setting.}
We have compared the setting (S\ref{approxSetting}) without advice with settings (S\ref{approxSettingAdvice}) --- (S\ref{approxSettingCounting}) with advice.
We have observed the average time of learning with approximate equivalence queries (S\ref{approxSetting}) to be $6.22$ seconds, 
while it is $8.18$ seconds on average with the associativity advice (S\ref{approxSettingAdvice}) and $7.55$ seconds, if the state-counting heuristic (S\ref{approxSettingCounting}) is involved. 
The accuracy is respectively 
$91.3\%$ for (S\ref{approxSetting}),  
$96.7\%$ for (S\ref{approxSettingAdvice}), and 
$96.6\%$ for (S\ref{approxSettingCounting}). 
Thus, the accuracy gain is significant.
We have also considered (S\ref{approxSettingAdviceApprox}), in  which the consistency checks are approximated as well. 
We have observed a significantly worse time of $43.18$ seconds on average, but the accuracy $95.4\%$ is still better than without advice.
Note that unlike in the equivalence queries, approximation of the consistency checks as in (S\ref{approxSettingAdviceApprox}) is not necessary, 
since  the learning algorithm has direct access to a candidate automaton and hence it can run the exact consistency check.
Surprisingly, while using advice reduced the number of equivalence queries, the number of used tokens, i.e., the total size of the trees used for membership queries to approximate equivalence queries, has increased.  
We plan to research this phenomenon. 
The complete notebook with the analysis of results is available at~\cite{experimentsRepo}.

\subsection{Distributivity}

We have evaluated the impact of a variant of distributivity $g(f(X,Y)) \to f(g(X), g(Y))$ on the learning process; we have addressed all the research questions Q1 --- Q4 on randomly generated automata. 
We first discuss the generation process, which ensures that the language of a generated DFTA satisfies the rule  $g(f(X,Y)) \to f(g(X), g(Y))$.

\paragraph{The dataset.}
We have generated random DFTA over signatures consisting of a binary symbol $f$, a unary symbol $g$ and constants. 
Next, each random DFTA $\aut$ was forced to be consistent with the rule $g(f(X,Y)) \to f(g(X), g(Y))$ in the following way: 
for all states $q_1, q_2$, if there are $s_1, s_2$ such that
$\delta_{\aut}(s_1, g) = q_1$ and 
$\delta_{\aut}(s_2, g) = q_2$, then  $\delta_{\aut}(q_1, q_2, f)$ is set to $\deltaExt{\aut}(g(f(X_1,X_2)))(q_1, q_2)$. 
Since $f$ is also on the right hand side, this does not guarantee that the resulting automaton recognises the language consistent with $g(f(X,Y)) \to f(g(X), g(Y))$.
Therefore, we have finally checked consistency with $g(f(X,Y)) \to f(g(X), g(Y))$ and rejected automata whose language was not consistent with  $g(f(X,Y)) \to f(g(X), g(Y))$.
We have generated $655$ random DFTA and in $567$  cases, the resulting DFTA were consistent with the distributivity rule.
The random DFTA had between $5$ and $256$ states. 

\paragraph{Results in the exact setting.}
We have compared settings (S\ref{stdSetting}) and (S\ref{adviceSetting}) regarding query and time complexity.
We have observed reduction of equivalence queries 
between $0\%$ to $98\%$ (from $57$ to a single equivalence query) with an average reduction of $64\%$.
On average, this corresponds to a reduction from $13.4$ to $4.85$ equivalence queries.
The runtime has decreased from $8.19$ seconds on average to $5.09$ seconds with advice.
The complete notebook with results is available at~\cite{experimentsRepo}.

\paragraph{Results in the approximate setting.}
We have compared the setting (S\ref{approxSetting}) without advice, and settings (S\ref{approxSettingAdvice}) --- (S\ref{approxSettingCounting}) with advice.
We have observed the average time of learning with approximate equivalence queries (S\ref{approxSetting}) to be $6.65$ seconds.
For the settings with advice, the average times were:
$7.94$ seconds with the distributivity advice (S\ref{approxSettingAdvice}),
$7.17$ seconds with the distributivity advice and approximation of consistency checks (S\ref{approxSettingAdviceApprox}), and
$8.13$ seconds with the distributivity advice and the state-counting heuristic (S\ref{approxSettingCounting}). 
The accuracy is respectively
$90.3\%$ for (S\ref{approxSetting}),  
$91.2\%$ for (S\ref{approxSettingAdvice}), 
$88.9\%$ for (S\ref{approxSettingAdviceApprox}), and 
$90.8\%$ for (S\ref{approxSettingCounting}).
Thus, the accuracy does not increase significantly, and with approximation of consistency checks it can even decrease.
Finally, we have estimated that 
the break-even point is reached when the average cost of a membership query is at least $0.534$ ms per token.
The complete notebook with the analysis of results is available at~\cite{experimentsRepo}.

\subsection{Commutativity}

We also evaluated the commutativity advice: $f(X,Y) \to f(Y,X)$.
However, it did not yield any reduction in the number of equivalence queries.
To see the reason, observe that in a tree language consistent with commutativity, the tree $f(t_1,t_2)$ belongs to the language if and only if $f(t_2,t_1)$ does.
Therefore, in any candidate automaton computed based on membership queries, the transition function for $f$ is commutative. 
This shows some limitations of rewriting-based advice. 
We have observed a similar phenomenon in the word setting, where the idempotence string rewrite rule $aa \to a$ 
has yielded little reduction in the number of equivalence queries~\cite{ecai25}. 
This was also due to the fact that the \lStar algorithm primarily generates DFA that have self-loops unless there is a test word implying that a self-loop is impossible.
The idempotence rule $aa \to a$ does not imply that all states have self-loops over $a$, but all states reachable over $a$ need to have a self-loop over $a$ (in the minimal DFA).
Thus, the idempotence rule can improve query complexity, but the gain is minimal.

In both cases, the learning algorithm discovers the properties expressed by advice through membership queries rendering the advice redundant.	
This suggests that very short rewrite rules are unlikely to be effective as advice.

\section{Generating advice}
\label{sec:synthesis}
While we have discussed how advice provided via a TRS can facilitate the learning of an unknown tree language, this section addresses the converse: \emph{synthesis} of rewrite rules consistent with a given regular tree language. 
Given a tree automaton recognizing a tree language, the goal is to find a TRS consistent with that language.
A synthesized TRS can serve as an explanation of the regular tree language, capturing its key properties through concise rewrite rules.
We consider the automaton to be given rather than learned, as we show that even this easier variant of the problem is already difficult.

We begin by observing that the transition function of a DFTA $\aut$  can be viewed as a ground term rewriting system. 
The resulting TRS $\rew_{\aut}$ \emph{completely characterizes} the target language, i.e.,
if $\rew_{\aut}$ is used as the advice TRS, the language of $\aut$ can be learned with a single equivalence query. 

\begin{fact}
Given a DFTA $\aut$ we can compute a ground TRS $\rew_{\aut}$ that completely characterizes $\lang(\aut)$.
\end{fact}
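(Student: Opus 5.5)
The plan is to turn the transition table of $\aut$ into ground rewrite rules over $\Sig$ itself, without adding state symbols, by fixing one representative tree per state.

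First, assume without loss of generality that $\aut$ is minimal; in any case, restrict to reachable states. For every state $q$, fix a representative tree $t_q$ with $\deltaFinal(t_q)=q$. For definiteness, take one of minimal height, computed by the usual bottom-up reachability saturation. For every transition $\delta(q_1,\ldots,q_k,f)=q$ with all $q_i$ reachable, add the ground rule
\[ f(t_{q_1},\ldots,t_{q_k}) \rewritesOneStep t_q, \]
omitting it when both sides coincide. The resulting $\rew_{\aut}$ is finite, ground, computable in time polynomial in $|\aut|$ if the $t_q$ are stored as DAGs, and its size is bounded by the number of transitions.

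Second, I will show two properties.
\begin{enumerate}[(a)]
\item \emph{Consistency.} Every rule $l\rewritesOneStep r$ satisfies $\deltaFinal(l)=\deltaFinal(r)$ by construction. Since the rules are ground, $\deltaExt{\aut}(l)$ and $\deltaExt{\aut}(r)$ are constant functions with the same value. By Lemma~\ref{l:compute-inconsistency-witnesses}, $\rew_{\aut}$ is then consistent with $\lang(\aut)$.
\item \emph{Reachability of representatives.} Every tree $s$ satisfies $s\rewrites_{\rew_{\aut}} t_{\deltaFinal(s)}$. The proof is by induction on $s=f(s_1,\ldots,s_k)$. By the induction hypothesis, and closure of $\rewritesOneStep$ under contexts, $s$ rewrites to $f(t_{q_1},\ldots,t_{q_k})$ with $q_i=\deltaFinal(s_i)$. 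One more rule application then gives $t_q$ with $q=\deltaFinal(s)$. This covers constants as the case $k=0$.
\end{enumerate}
Termination or confluence of $\rew_{\aut}$ is not needed, because consistency is phrased via $\rewrites$.

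Third, I will derive complete characterization. Let $\lang$ be any tree language consistent with $\rew_{\aut}$. By (b), for every tree $s$ we have $s\in\lang \iff t_{\deltaFinal(s)}\in\lang$. Hence $\lang$ is determined by its intersection with the finite set $\{t_q\}_{q\in Q}$. In particular, if $\lang$ and $\lang(\aut)$ agree on the representatives, then $\lang=\lang(\aut)$. Membership queries on the $t_q$, or cached answers, therefore pin down the target. Any hypothesis that is consistent with $\rew_{\aut}$ and correct on the representatives is accepted. So once the learner's hypothesis passes the consistency check, the single equivalence query succeeds.

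The main obstacle is making precise what ``completely characterizes'' should mean, since rewriting advice alone cannot fix acceptance of the representatives. I will state it as the representative-agreement property above. Everything else is routine.
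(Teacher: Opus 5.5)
Your proposal is correct and follows the paper's one-line idea of reading the transition table of $\aut$ as ground rewrite rules. You make it precise by replacing states with representative trees $t_q$, so that $\rew_{\aut}$ stays over $\Sig$. Your reading of ``completely characterizes'' as agreement on the representatives is the right one, and more careful than the paper's gloss.

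The one slip is your final sentence. Passing the consistency check does not by itself make a hypothesis correct on the $t_q$: a one-state hypothesis accepting all trees or none is consistent with every TRS. So the single-equivalence-query claim also needs the learner to compare the hypothesis with membership answers on the subterms of the rules of $\rew_{\aut}$ together with the constants of $\Sig$. This is a finite set containing every $t_q$, and any disagreement on it is a counterexample obtained without an equivalence query.
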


While $\rew_{\aut}$ demonstrates that the number of equivalence queries can be reduced to one, 
it is not practically relevant, as it is derived directly from the automaton being learned.
Such detailed advice essentially describes the automaton itself, rendering the learning process redundant. 
Furthermore, this TRS is not structurally insightful as it does not explain 
the tree language any better than the automaton itself.
Therefore, we are interested in  synthesizing \textbf{non-ground} rewrite rules, where
the left-hand side is a non-ground term, that are consistent with a given tree language and possibly small. 

Formally, we study the following problem:
\begin{definition}[Synthesis of TRS] 
Given a DFTA $\aut$ over a signature $\Sig$, the Rule Synthesis problem is to find a non-ground rewrite rule 
$l \to r$, where $l,r \in \trees(\Sig, \Vars)$, 
such that (1)~$l \to r$ is consistent with $\lang(\aut)$,
(2)~the value $|l|+|r|$ is minimal among rules satisfying (1).
\end{definition}

However, the hardness of the Rule Synthesis problem emerges already in the case of unary and constant symbols.

\paragraph{The unary case.}
Consider a signature $\SigUnary$ consisting of unary symbols $a,b,c$ and a constant $\#$. 
Let $\aut$ be a DFTA over $\SigUnary$.
Observe that a ground term $t$ over $\SigUnary$ corresponds to a state of $\aut$ with 
$\deltaExt{\aut}(t) = \deltaFinal(t)$, while a non-ground term $s$ corresponds to a unary state transformation, i.e., 
$\deltaExt{\aut}(s)$ is a function from $Q$ to $Q$.

The rule $l \to r$ is consistent with $\lang(\aut)$ if and only if 
the transformation induced by $l$ is a constant function equal to $\deltaFinal(r)$.
If we view the unary symbols in $l$ (read from bottom to top) as a word $w_l$ over $\set{a,b,c}$, 
then $w_l$ is a \textbf{synchronizing word} in $\aut$ regarded as a DFA. 

While it can be decided in polynomial time whether a DFA has a synchronizing word~\cite{cerny1964poznamka}, 
finding the shortest synchronizing words in DFA is \NP-complete~\cite{Eppstein90} and 
it is even \NP-hard to approximate it within a constant factor~\cite{Berlinkov14}. 
The hardness of approximation carries directly from DFA to DFTA.

\begin{lemma}
If $\PTime \neq \NP$, there is no polynomial-time algorithm that, given a DFTA $\aut$ over $\SigUnary$, approximates within a constant ratio 
the size of the shortest rewrite rule $l \to r$ such that $r$ is a ground term and $l$ is a non-ground term, and 
$l \to r$ is consistent with $\lang(\aut)$. 
\end{lemma}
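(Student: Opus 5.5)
We reduce from the problem of approximating the length of the shortest synchronizing word in a DFA, which is \NP-hard to approximate within any constant factor~\cite{Berlinkov14}. Berlinkov's result holds already for DFA over a binary alphabet, so we take the DFA alphabet to be $\set{a,b}$ (or $\set{a,b,c}$) and identify its letters with the unary symbols of $\SigUnary$. The reduction must be approximation-preserving: a polynomial-time constant-factor approximation of the minimal rule size for DFTA would give one for the shortest synchronizing word.

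\textbf{Construction.} Given a DFA $D = (Q, \set{a,b,c}, \delta_D)$, build a DFTA $\aut$ over $\SigUnary$ with the same states. Each unary symbol acts as the corresponding letter. The constant $\#$ is sent to some fixed state $q_\#$; any state works, provided all states remain reachable, which we address below. The accepting set must be chosen so that every pair of distinct states of $D$ is distinguished, so that $\aut$ is minimal and Lemma~\ref{l:compute-inconsistency-witnesses} applies. Both issues are handled by a small gadget:
\begin{itemize}
\item add extra symbols or states that make every state reachable and pairwise distinguishable;
\item make sure the gadget does not create new short synchronizing words, for instance by letting the gadget symbol act as a permutation, or by restricting the ambient signature to $a,b,c,\#$ and padding with a fresh symbol that is injective on $Q$.
\end{itemize}
Alternatively, we avoid minimality altogether and argue directly from the definition of consistency, using the observation stated just before the lemma. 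With it, a rule $l \to r$ with $r$ ground and $l$ non-ground, written as $l = w_l(X)$ with $w_l$ a word, is consistent exactly when $w_l$ maps every state to $\deltaFinal(r)$.

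\textbf{Size correspondence.} The minimal rule then has size $|l|+|r| = (|w|+1) + |r|$, where $w$ is a shortest synchronizing word. For the ground term we take $r$ to be a shortest tree reaching the synchronizing state. This adds an additive term of at most $|Q|$, which could distort constant-factor ratios. To fix this, we arrange that $\#$ itself lands in a state that every synchronizing word can reach, or we add a fresh constant per state. More simply, we note that Berlinkov's hardness instances have shortest synchronizing words of length $\Omega(|Q|)$, or we pad by replacing each letter with a block of $k$ copies. Either way the additive $O(|Q|)$ term is absorbed, and a $\rho$-approximation of the rule size yields an $O(\rho)$-approximation of the shortest synchronizing word.

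\textbf{Main obstacle.} The delicate step is this last one: making the additive overhead from $|r|$, and from the minimality gadget, negligible relative to $|w|$ while preserving the hardness gap. I would first try to invoke the precise form of Berlinkov's theorem, whose gap instances have synchronization thresholds growing with $n$, and then use the letter-repetition padding if needed.
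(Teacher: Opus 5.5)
Your reduction from Berlinkov's result is the same as the paper's, but there is a gap: the step you call the main obstacle is left open. With only $|r| \le |Q|$, you get $|w|+2 \le \mathrm{OPT} \le |w|+1+|Q|$. This says nothing about constant-factor approximation when $|w| \ll |Q|$.

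The repairs you list do not close it as stated:
\begin{itemize}
\item A fresh constant per state leaves the fixed signature $\SigUnary$.
\item That Berlinkov's gap instances have thresholds $\Omega(|Q|)$ is an unverified assumption, and the gap is relative to the optimum, not to $|Q|$.
\item ``Replacing each letter by a block of $k$ copies'' is not a DFA transformation known to preserve shortest synchronizing words. The obvious version, where every letter advances a counter modulo $k$, is never synchronizing for $k \geq 2$.
\end{itemize}

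Your remark about $\#$ points the right way, but nothing has to be arranged. If $w$ is a shortest synchronizing word with target $p$, then in particular $\deltaFinal(w(\#)) = p$. So the rule $w(X) \to w(\#)$ is consistent and has $|l| = |r| = |w|+1$. For the lower bound, every consistent rule has $l = v(X)$ with $v$ synchronizing. Together these give $|w|+2 \le |l|+|r| \le 2|w|+2$ for the shortest consistent rule, which is exactly how the paper concludes. A $\rho$-approximation of the rule size then yields a $4\rho$-approximation of $|w|$ (using $|w| \geq 1$ when $|Q| \geq 2$). No padding and no property of Berlinkov's instances is needed.

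On minimality you are more careful than the paper. Its proof takes ``the associated DFTA'' and uses the observation that consistency means $l$ induces a constant transformation. That observation presupposes a minimal DFTA. Consistency is a property of $\lang(\aut)$, and it only forces $w_l$ to merge the states reachable from $\deltaFinal(\#)$ up to language equivalence. So your ``alternative'' of invoking that observation does not avoid minimality; it assumes it.

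Your gadget suggestions are also not harmless as stated. An extra letter acting as a permutation can drastically shorten synchronizing words, or even make a non-synchronizing automaton synchronizing. For example, a letter that merges two states and fixes the rest, plus a cyclic shift, gives the \v{C}ern\'y automaton. To close this point rather than assume it as the paper implicitly does, you need a concrete gadget within $\SigUnary$. You also need a proof that it changes the shortest synchronizing length by at most a constant factor.
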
	
\begin{proof}
Consider a DFA $\autB$. 
Let $\aut$ be the associated DFTA. 
Consider a rewrite rule $l \to r$ such that $r$ is a ground term and $l$ is a non-ground term.
If $l \to r$ is consistent with $\lang(\aut)$, then $\deltaExt{\aut}(l)$ is a constant function.
The symbols along $l$ form a synchronizing word $w_S$ in $\autB$, and hence $|w_S| \leq |l|+|r|$.
Conversely, if $w_S$ is a synchronizing word in $\autB$, then the context $c_S$ corresponding to $w_S$ satisfies
$|c_S| = |w_S|+1$. Now, $c_S(X) \to c_S(\#)$ is a rewrite rule that is 
(1)~consistent with $\lang(\aut)$, 
(2)~$c_S(X)$ is non-ground and $c_S(\#)$ is ground, and
(3)~$|c_S(X)| + |c_S(\#)| = 2|w_S|+2$. 
It follows that for the shortest rewrite rule $l \to r$ consistent with $\lang(\aut)$ and the shortest synchronizing word $w_S$ 
for $\autB$ we have $|w_S| \leq |l| + |r| \leq 2|w_S|+2$. 
Finally, since approximation of the shortest synchronizing word within any constant factor is impossible unless $\PTime = \NP$,
the result follows.
\end{proof}

On the positive side, if we drop the minimality condition and restrict to contexts, 
the existence of such a rule is decidable in polynomial time similarly to~\cite{cerny1964poznamka}:
\begin{lemma}
It is decidable in polynomial time, given a DFTA $\aut$ over $\Sig$, whether there exists a rule $l \to r$ such that
(1)~$l$ is a context and $r$ is a ground term, and
(2)~$l \to r$ is consistent with $\lang(\aut)$. 
\end{lemma}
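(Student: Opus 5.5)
By Lemma~\ref{l:compute-inconsistency-witnesses}, a rule $l \to r$ with $l = c$ a context (one occurrence of $X$) and $r$ ground is consistent with $\lang(\aut)$ if and only if the unary state transformation $\deltaExt{\aut}(c)\colon Q \to Q$ is the constant function with value $\deltaFinal(r)$. We may assume $\aut$ is minimal, since minimisation is polynomial. Every state is then reachable, so the right-hand side can be any tree evaluating to the common value.

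The question therefore becomes: is there a context $c$ such that $\deltaExt{\aut}(c)$ is constant? This is a synchronisation problem for the transformation monoid generated by the contexts of height one. Such a context has the form $f(s_1,\ldots,s_{i-1},X,s_{i+1},\ldots,s_k)$, and it induces the map $q \mapsto \delta(p_1,\ldots,p_{i-1},q,p_{i+1},\ldots,p_k,f)$, where each $p_j = \deltaFinal(s_j)$ ranges over the reachable states, i.e.\ all of $Q$. Every context is a composition of height-one contexts, so the induced maps are exactly the compositions of these generators. The generator set $G$ has size at most $|\aut|\cdot|Q|$ or so, which is polynomial in the transition table. We thus obtain a DFA-like structure on $Q$ with alphabet $G$, and we ask whether it has a synchronising word.

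We then apply \v{C}ern\'y's pair criterion~\cite{cerny1964poznamka}. A synchronising composition exists if and only if, for every pair of states $q,q' \in Q$, there is a composition $g \in G^*$ with $g(q) = g(q')$. The ``only if'' direction is immediate. For the ``if'' direction we use the greedy argument: maintain the image $S = h(Q)$ of the current composition $h$. While $|S| > 1$, pick two states of $S$, find a word merging them, and append it. This strictly decreases $|S|$, so at most $|Q|-1$ rounds are needed. The pairwise condition is checked in polynomial time by backward reachability in the pair graph on $Q \times Q$. The target set is the diagonal, and there is an edge from $\{q,q'\}$ to $\{g(q),g(q')\}$ for each $g \in G$. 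The algorithm answers yes if and only if every pair reaches the diagonal.

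The step requiring care is the reduction from contexts to compositions of height-one generators. It must respect the requirement that the variable occur exactly once and that the siblings be ground trees. It must also cover the degenerate case where $\aut$ has a single state, where the context $X$ itself or any height-one context works trivially. We further need to justify that replacing the sibling subtrees by arbitrary reachable states loses nothing. This holds because $\deltaExt{\aut}(c)$ depends only on the states of the ground siblings, and every state is realised by some tree. A secondary concern is the size of $G$ when arities are large. Enumerating all sibling-state tuples is exponential in the arity, but it is linear in the size of the transition table, which is what $|\aut|$ measures. Hence the whole procedure remains polynomial in $|\aut|$.
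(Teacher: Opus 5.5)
Your proposal is correct and follows the paper's proof. You use the same alphabet of height-one contexts whose sibling positions are labelled by states (the paper's letters $a[f,i,\vec{q}]$), note that $c(X) \to r$ is consistent exactly when $c$ induces a constant map, and decide whether a synchronizing word exists via \v{C}ern\'y's pair criterion. You are also slightly more careful than the paper: you minimize $\aut$ first, which is needed to apply Lemma~\ref{l:compute-inconsistency-witnesses} and to realize every sibling state by a tree, and you include unary symbols, which the paper's restriction to arity $k>1$ leaves out.
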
 
\begin{proof}
We show this by the reduction to finding a synchronizing word for DFA. 
Let $Q$ be the set of states of $\aut$ and $n = |Q|$. 
We define $\Sigma$ obtained from $\Sig$ as follows: for any symbol $f \in \Sig$ of arity $k>1$ 
we generate $k \cdot n^{k-1}$ letters: for every position $i$ in $f$ and 
 every sequence of states $\vec{q} \in Q^{k-1}$ we define a letter $a[f,i,\vec{q}]$.

Now, we define a DFA $\autB$ such that its set of states is $Q$, $q_0$ is any state and the set of accepting states $F$ is empty 
($q_0$ and $F$ are irrelevant for synchronization). 
The transition relation is defined as follows: for $q \in Q$ and $a[f,i,\vec{q}] \in \Sigma$ we have
$\delta_{\autB}(p,a[f,i,\vec{q}]) = \delta_{\aut}(\vec{q}',f)$, where
$\vec{q}'$ is obtained from $\vec{q}$ by inserting state $p$ at the position $i$ and shifting all subsequent states, i.e., 
$a[f,i,\vec{q}]$ behaves as a context with the hole at position $i$ and states from $\vec{q}$ at the remaining positions.

Observe that words over $\Sigma$ correspond to reduced contexts, i.e., context where ground terms are pruned to the shortest trees.
If there is a rule $l \to r$ is consistent with $\lang(\aut)$, where $l$ is a context and $r$ is a ground term, then 
$l$ can be transformed to the corresponding synchronizing word by traversing the path from the root to the hole and transforming each context
to  the corresponding letter. 
Conversely, any synchronizing word can be transformed back into a context $c$ such that $\deltaExt{\aut}(c)$ is a constant function 
returning state $q_c$ for every argument. Let $t_c$ be any tree to which $\aut$ assigns the state $q_c$ satisfies (1) and (2). 
\end{proof}

Not only is the complexity of rule synthesis challenging, but there is also no single clear objective.

\paragraph{Synthesis objectives.}
While non-ground rules $l \to r$ are more meaningful than ground rules, 
several questions remain regarding the objectives for synthesis:
\begin{itemize}
    \item Should both terms $l$ and $r$ be non-ground, or only $l$?
    \item Should $l$ be permitted to contain constants, or only variables?
    \item Should $l$ be restricted to linear terms, or allowed to contain repeated variables (non-linear terms)?
\end{itemize}
Investigating these variations is essential for the design of future synthesis algorithms. 
Consequently, we leave the comprehensive synthesis of rewrite rules as an open problem for future work.
 
\section{Conclusions}
We have presented a method that leverages structural knowledge of a target tree language to significantly reduce the number of equivalence queries required during active learning. 
In contrast to the word-automata setting, the branching structure of trees means that checking consistency of 
a TRS with a tree automaton, which is the backbone of our inference algorithm, entails a higher computational complexity.

Thus, a key direction for our future work involves developing further heuristics and investigating specific subclasses of TRSs 
to reduce the computational cost of consistency checking. While we have established several complexity bounds, the complexity of checking consistency of linear rules remains an open problem.

Beyond theoretical refinements, we aim to explore broader applications of this framework. 
Building on successful experiments with the associativity rewrite rule on synthetic data, 
we believe the framework is well-suited for more complex domains. 
Specifically, since our advice mechanism can express various properties of Description Logics (DL), applying active tree automaton learning with advice to learn DL terminologies represents a promising research path.

Finally, while we have only briefly explored the synthesis of consistent TRSs, we consider this a promising frontier for future research. 
Identifying the right objectives for such synthesis will be a prerequisite for developing robust automated advice-generation tools.

\section*{Acknowledgments}
This work was supported by the National Science Centre (NCN), Poland under grant 2024/53/B/ST6/01620.
We thank the anonymous reviewers at ICJAI 2026 for their valuable feedback.
\bibliographystyle{plain}
\bibliography{papers}

@inproceedings{ecai25,
  author       = {Michal Fica and
                  Jan Otop},
  title        = {Active Automata Learning with Advice},
  booktitle    = {{ECAI} 2025},
  series       = {Frontiers in Artificial Intelligence and Applications},
  volume       = {413},
  pages        = {1655--1662},
  publisher    = {{IOS} Press},
  year         = {2025}  
  
}

@inproceedings{learningEL,
  author       = {Matteo Magnini and Riccardo Squarcialupi and Martin Tunge Sterri and Ana Ozaki and Rivera Castillo},
  title        = {Actively Learning EL Terminologies from Large Language Models},
  booktitle    = {{ECAI} 2025},
  series       = {Frontiers in Artificial Intelligence and Applications},
  volume       = {413},
  pages        = {1792--1799},
  publisher    = {{IOS} Press},
  year         = {2025}  
}

@article{Berlinkov14,
  author       = {Mikhail V. Berlinkov},
  title        = {Approximating the Minimum Length of Synchronizing Words Is Hard},
  journal      = {Theory Comput. Syst.},
  volume       = {54},
  number       = {2},
  pages        = {211--223},
  year         = {2014},
  url          = {https://doi.org/10.1007/s00224-013-9511-y},
  doi          = {10.1007/S00224-013-9511-Y},
  bibsource    = {dblp computer science bibliography, https://dblp.org}
}

@article{Eppstein90,
  author       = {David Eppstein},
  title        = {Reset Sequences for Monotonic Automata},
  journal      = {{SIAM} J. Comput.},
  volume       = {19},
  number       = {3},
  pages        = {500--510},
  year         = {1990},
  url          = {https://doi.org/10.1137/0219033},
  doi          = {10.1137/0219033},
  bibsource    = {dblp computer science bibliography, https://dblp.org}
}

@article{transitiveReduction,
author = {Aho, A. V. and Garey, M. R. and Ullman, J. D.},
title = {The Transitive Reduction of a Directed Graph},
journal = {SIAM Journal on Computing},
volume = {1},
number = {2},
pages = {131-137},
year = {1972},
doi = {10.1137/0201008},
URL = {https://doi.org/10.1137/0201008},
eprint = {https://doi.org/10.1137/0201008}
}

@article{cerny1964poznamka,
  title={Pozn{\'a}mka k homog{\'e}nnym experimentom s kone{\v{c}}n{\`y}mi automatmi},
  author={{\v{C}}ern{\`y}, J{\'a}n},
  journal={Matematicko-fyzik{\'a}lny {\v{c}}asopis},
  volume={14},
  number={3},
  pages={208--216},
  year={1964},
  publisher={Mathematical Institute of the Slovak Academy of Sciences}
}

@book{tata,
  TITLE = {{Tree Automata Techniques and Applications}},
  AUTHOR = {Comon, Hubert and Dauchet, Max and Gilleron, R{\'e}mi and Jacquemard, Florent and Lugiez, Denis and L{\"o}ding, Christof and Tison, Sophie and Tommasi, Marc},
  URL = {https://inria.hal.science/hal-03367725},
  PAGES = {262},
  YEAR = {2008},
  HAL_ID = {hal-03367725},
  HAL_VERSION = {v1},
}

@book{BaaderBook,
  author       = {Franz Baader and
                  Tobias Nipkow},
  title        = {Term rewriting and all that},
  publisher    = {Cambridge University Press},
  year         = {1998},
  isbn         = {978-0-521-45520-6},
  bibsource    = {dblp computer science bibliography, https://dblp.org}
}

@inproceedings{DBLP:conf/ijcai/KosalaBBB03,
  author       = {Raymond Kosala and
                  Maurice Bruynooghe and
                  Jan Van den Bussche and
                  Hendrik Blockeel},
  title        = {Information Extraction from Web Documents Based on Local Unranked
                  Tree Automaton Inference},
  booktitle    = {IJCAI 2003},
  pages        = {403--408},
  publisher    = {Morgan Kaufmann},
  year         = {2003},
  url          = {http://ijcai.org/Proceedings/03/Papers/060.pdf},
  bibsource    = {dblp computer science bibliography, https://dblp.org}
}

@article{DLBook,
  title={An introduction to description logics.},
  author={Nardi, Daniele and Brachman, Ronald J and others},
  journal={Description logic handbook},
  volume={1},
  pages={40},
  year={2003}
}

@inproceedings{DBLP:conf/rta/BonevaT05,
  author       = {Iovka Boneva and
                  Jean{-}Marc Talbot},
  editor       = {J{\"{u}}rgen Giesl},
  title        = {Automata and Logics for Unranked and Unordered Trees},
  booktitle    = {{RTA} 2005},
  series       = {Lecture Notes in Computer Science},
  volume       = {3467},
  pages        = {500--515},
  publisher    = {Springer},
  year         = {2005},
  url          = {https://doi.org/10.1007/978-3-540-32033-3\_36},
  doi          = {10.1007/978-3-540-32033-3\_36},
  bibsource    = {dblp computer science bibliography, https://dblp.org}
}

@inproceedings{DBLP:conf/dlt/DrewesH03,
  author       = {Frank Drewes and
                  Johanna H{\"{o}}gberg},
  editor       = {Zolt{\'{a}}n {\'{E}}sik and
                  Zolt{\'{a}}n F{\"{u}}l{\"{o}}p},
  title        = {Learning a Regular Tree Language from a Teacher},
  booktitle    = {{DLT} 2003},
  series       = {Lecture Notes in Computer Science},
  volume       = {2710},
  pages        = {279--291},
  publisher    = {Springer},
  year         = {2003},
  url          = {https://doi.org/10.1007/3-540-45007-6\_22},
  doi          = {10.1007/3-540-45007-6\_22},
  bibsource    = {dblp computer science bibliography, https://dblp.org}
}

@inproceedings{DBLP:conf/icgi/HabrardO06,
  author       = {Amaury Habrard and
                  Jos{\'{e}} Oncina},
  editor       = {Yasubumi Sakakibara and
                  Satoshi Kobayashi and
                  Kengo Sato and
                  Tetsuro Nishino and
                  Etsuji Tomita},
  title        = {Learning Multiplicity Tree Automata},
  booktitle    = {{ICGI} 2006},
  series       = {Lecture Notes in Computer Science},
  volume       = {4201},
  pages        = {268--280},
  publisher    = {Springer},
  year         = {2006},
  url          = {https://doi.org/10.1007/11872436\_22},
  doi          = {10.1007/11872436\_22},
  bibsource    = {dblp computer science bibliography, https://dblp.org}
}

@article{charatonik1999automata,
  title={Automata on DAG representations of finite trees},
  author={Charatonik, Witold},
  year={1999},
  publisher={Max-Planck-Institut f{\"u}r Informatik}
}

@article{DBLP:journals/ipl/AnantharamanNR05,
  author       = {Siva Anantharaman and
                  Paliath Narendran and
                  Micha{\"{e}}l Rusinowitch},
  title        = {Closure properties and decision problems of dag automata},
  journal      = {Inf. Process. Lett.},
  volume       = {94},
  number       = {5},
  pages        = {231--240},
  year         = {2005},
  url          = {https://doi.org/10.1016/j.ipl.2005.02.004},
  doi          = {10.1016/J.IPL.2005.02.004},
  bibsource    = {dblp computer science bibliography, https://dblp.org}
}

@inproceedings{bhattamishraautomata,
title={Automata Learning and Identification of the Support of Language Models},
author={Satwik Bhattamishra and Michael Hahn and Varun Kanade},
booktitle={The Fourteenth International Conference on Learning Representations},
year={2026},
url={https://openreview.net/forum?id=L8SMNWsxfK}
}

@misc{vazquezchanlatte2025ll,
      title={$L^*LM$: Learning Automata from Examples using Natural Language Oracles}, 
      author={Marcell Vazquez-Chanlatte and Karim Elmaaroufi and Stefan J. Witwicki and Matei Zaharia and Sanjit A. Seshia},
      year={2025},
      eprint={2402.07051},
      archivePrefix={arXiv},
      primaryClass={cs.LG},
      url={https://arxiv.org/abs/2402.07051}, 
}

@inproceedings{DBLP:conf/sle/Raselimo021,
  author       = {Moeketsi Raselimo and
                  Bernd Fischer},
  editor       = {Eelco Visser and
                  Dimitris S. Kolovos and
                  Emma S{\"{o}}derberg},
  title        = {Automatic grammar repair},
  booktitle    = {{SLE} 2021},
  pages        = {126--142},
  publisher    = {{ACM}},
  year         = {2021},
  url          = {https://doi.org/10.1145/3486608.3486910},
  doi          = {10.1145/3486608.3486910},
  bibsource    = {dblp computer science bibliography, https://dblp.org}
}

@inproceedings{DBLP:conf/icdt/GrienenbergerR19,
  author       = {{\'{E}}milie Grienenberger and
                  Martin Ritzert},
  editor       = {Pablo Barcel{\'{o}} and
                  Marco Calautti},
  title        = {Learning Definable Hypotheses on Trees},
  booktitle    = {{ICDT} 2019},
  series       = {LIPIcs},
  volume       = {127},
  pages        = {24:1--24:18},
  publisher    = {Schloss Dagstuhl - Leibniz-Zentrum f{\"{u}}r Informatik},
  year         = {2019},
  url          = {https://doi.org/10.4230/LIPIcs.ICDT.2019.24},
  doi          = {10.4230/LIPICS.ICDT.2019.24},
  bibsource    = {dblp computer science bibliography, https://dblp.org}
}

@inproceedings{DBLP:conf/pldi/YaghmazadehKDC16,
  author       = {Navid Yaghmazadeh and
                  Christian Klinger and
                  Isil Dillig and
                  Swarat Chaudhuri},
  editor       = {Chandra Krintz and
                  Emery D. Berger},
  title        = {Synthesizing transformations on hierarchically structured data},
  booktitle    = {{PLDI} 2016},
  pages        = {508--521},
  publisher    = {{ACM}},
  year         = {2016},
  url          = {https://doi.org/10.1145/2908080.2908088},
  doi          = {10.1145/2908080.2908088},
  bibsource    = {dblp computer science bibliography, https://dblp.org}
}

@article{angluin1987learning,
  title={Learning regular sets from queries and counterexamples},
  author={Angluin, Dana},
  journal={Information and computation},
  volume={75},
  number={2},
  pages={87--106},
  year={1987},
  publisher={Elsevier}
}

@inproceedings{TTTalgorithm,
  author       = {Malte Isberner and
                  Falk Howar and
                  Bernhard Steffen},
  title        = {The {TTT} Algorithm: {A} Redundancy-Free Approach to Active Automata
                  Learning},
  booktitle    = {{RV} 2014},
  pages        = {307--322},
  year         = {2014},
  url          = {https://doi.org/10.1007/978-3-319-11164-3\_26},
  doi          = {10.1007/978-3-319-11164-3\_26},
  bibsource    = {dblp computer science bibliography, https://dblp.org}
}

@article{ADTalgorithm,
  author       = {Markus Theo Frohme},
  title        = {Active Automata Learning with Adaptive Distinguishing Sequences},
  journal      = {CoRR},
  volume       = {abs/1902.01139},
  year         = {2019},
  url          = {http://arxiv.org/abs/1902.01139},
  eprinttype    = {arXiv},
  eprint       = {1902.01139},
  bibsource    = {dblp computer science bibliography, https://dblp.org}
}

@inproceedings{Lsharp,
  author       = {Frits W. Vaandrager and
                  Bharat Garhewal and
                  Jurriaan Rot and
                  Thorsten Wi{\ss}mann},
  title        = {A New Approach for Active Automata Learning Based on Apartness},
  booktitle    = {{TACAS} 2022},
  pages        = {223--243},
  year         = {2022},
  url          = {https://doi.org/10.1007/978-3-030-99524-9\_12},
  doi          = {10.1007/978-3-030-99524-9\_12},
  bibsource    = {dblp computer science bibliography, https://dblp.org}
}

@inproceedings{KrugerJR24,
  author       = {Loes Kruger and
                  Sebastian Junges and
                  Jurriaan Rot},
  title        = {Small Test Suites for Active Automata Learning},
  booktitle    = {{TACAS} 2024},
  pages        = {109--129},
  year         = {2024},
  url          = {https://doi.org/10.1007/978-3-031-57249-4\_6},
  doi          = {10.1007/978-3-031-57249-4\_6},
  bibsource    = {dblp computer science bibliography, https://dblp.org}
}

@article{MarusicW15,
  author    = {Ines Marusic and
               James Worrell},
  title     = {Complexity of equivalence and learning for multiplicity tree automata},
  journal   = {Journal of Machine Learning Research},
  volume    = {16},
  pages     = {2465--2500},
  year      = {2015},
}

@inproceedings{DierlFHJST24,
  author       = {Simon Dierl and
                  Paul Fiterau{-}Brostean and
                  Falk Howar and
                  Bengt Jonsson and
                  Konstantinos Sagonas and
                  Fredrik T{\aa}quist},
  title        = {Scalable Tree-based Register Automata Learning},
  booktitle    = {{TACAS} 2024},
  pages        = {87--108},
  year         = {2024},
  url          = {https://doi.org/10.1007/978-3-031-57249-4\_5},
  doi          = {10.1007/978-3-031-57249-4\_5},
  bibsource    = {dblp computer science bibliography, https://dblp.org}
}

@article{ModelLearning,
  title={Model learning},
  author={Vaandrager, Frits},
  journal={Communications of the ACM},
  volume={60},
  number={2},
  pages={86--95},
  year={2017},
  publisher={ACM New York, NY, USA}
}

@inproceedings{BaaderTrees,
  author       = {Franz Baader and
                  Paliath Narendran},
  editor       = {Henri Prade},
  title        = {Unification of Concept Terms in Description Logics},
  booktitle    = {ECAI 1998},
  pages        = {331--335},
  publisher    = {John Wiley and Sons},
  year         = {1998},
  bibsource    = {dblp computer science bibliography, https://dblp.org}
}

@article{DBLP:journals/mst/DrewesH07,
  author       = {Frank Drewes and
                  Johanna H{\"{o}}gberg},
  title        = {Query Learning of Regular Tree Languages: How to Avoid Dead States},
  journal      = {Theory Comput. Syst.},
  volume       = {40},
  number       = {2},
  pages        = {163--185},
  year         = {2007},
  url          = {https://doi.org/10.1007/s00224-005-1233-3},
  doi          = {10.1007/S00224-005-1233-3},
  bibsource    = {dblp computer science bibliography, https://dblp.org}
}

@inproceedings{barbot2021extracting,
  title={Extracting context-free grammars from recurrent neural networks using tree-automata learning and a* search},
  author={Barbot, Beno{\^\i}t and Bollig, Benedikt and Finkel, Alain and Haddad, Serge and Khmelnitsky, Igor and Leucker, Martin and Neider, Daniel and Roy, Rajarshi and Ye, Lina},
  booktitle={{ICGI} 2021},
  pages={113--129},
  year={2021},
  organization={PMLR}
}

@inproceedings{DBLP:conf/fossacs/HeerdtKR021,
  author       = {Gerco van Heerdt and
                  Tobias Kapp{\'{e}} and
                  Jurriaan Rot and
                  Alexandra Silva},
  editor       = {Stefan Kiefer and
                  Christine Tasson},
  title        = {Learning Pomset Automata},
  booktitle    = {{ETAPS} 2021},
  volume       = {12650},
  pages        = {510--530},
  publisher    = {Springer},
  year         = {2021},
  url          = {https://doi.org/10.1007/978-3-030-71995-1\_26},
  doi          = {10.1007/978-3-030-71995-1\_26},
  bibsource    = {dblp computer science bibliography, https://dblp.org}
}

@article{DBLP:journals/jalc/TirnaucaT07,
  author       = {Catalin Ionut Tirn\u{a}uc\u{a} and
                  Cristina Tirn\u{a}uc\u{a}},
  title        = {Learning Regular Tree Languages from Correction and Equivalence Queries},
  journal      = {J. Autom. Lang. Comb.},
  volume       = {12},
  number       = {4},
  pages        = {501--524},
  year         = {2007},
  url          = {https://doi.org/10.25596/jalc-2007-501},
  doi          = {10.25596/JALC-2007-501},
  bibsource    = {dblp computer science bibliography, https://dblp.org}
}

@article{DBLP:journals/tcs/Kasprzik13,
  author       = {Anna Kasprzik},
  title        = {Four one-shot learners for regular tree languages and their polynomial
                  characterizability},
  journal      = {Theor. Comput. Sci.},
  volume       = {485},
  pages        = {85--106},
  year         = {2013},
  url          = {https://doi.org/10.1016/j.tcs.2013.01.003},
  doi          = {10.1016/J.TCS.2013.01.003},
  bibsource    = {dblp computer science bibliography, https://dblp.org}
}

@inproceedings{DBLP:conf/aaai/NitayFZ21,
  author       = {Dolav Nitay and
                  Dana Fisman and
                  Michal Ziv{-}Ukelson},
  title        = {Learning of Structurally Unambiguous Probabilistic Grammars},
  booktitle    = {{AAAI} 2021},
  pages        = {9170--9178},
  publisher    = {{AAAI} Press},
  year         = {2021},
  url          = {https://doi.org/10.1609/aaai.v35i10.17107},
  doi          = {10.1609/AAAI.V35I10.17107},
  bibsource    = {dblp computer science bibliography, https://dblp.org}
}

@inproceedings{BjorklundBE17,
  author       = {Henrik Bj{\"{o}}rklund and
                  Johanna Bj{\"{o}}rklund and
                  Petter Ericson},
  editor       = {Arnaud Carayol and
                  Cyril Nicaud},
  title        = {On the Regularity and Learnability of Ordered {DAG} Languages},
  booktitle    = {{CIAA} 2017},
  series       = {Lecture Notes in Computer Science},
  volume       = {10329},
  pages        = {27--39},
  publisher    = {Springer},
  year         = {2017},
  url          = {https://doi.org/10.1007/978-3-319-60134-2\_3},
  doi          = {10.1007/978-3-319-60134-2\_3},
  bibsource    = {dblp computer science bibliography, https://dblp.org}
}

@misc{experimentsRepo,
  author = {Jakub Kopystiański and Jan Otop},
  title = {Learning Tree Automata with Term Rewriting: code and data},
  year = {2026},
  publisher = {GitHub},
  journal = {GitHub repository},
  howpublished = {\url{https://github.com/jotop/LearnDFTAwithTRS}},
  commit = {}
}

@inproceedings{ijcai26,
  author={Jakub Kopystiański and Jan Otop},
  title={Learning Tree Automata with Term Rewriting}, 
  booktitle    = {to appear at {IJCAI} 2026},
  publisher    = {ijcai.org},
  year         = {2026},
}

@article{DBLP:journals/dam/Tovey84,
  author       = {Craig A. Tovey},
  title        = {A simplified NP-complete satisfiability problem},
  journal      = {Discret. Appl. Math.},
  volume       = {8},
  number       = {1},
  pages        = {85--89},
  year         = {1984},
  url          = {https://doi.org/10.1016/0166-218X(84)90081-7},
  doi          = {10.1016/0166-218X(84)90081-7},
  bibsource    = {dblp computer science bibliography, https://dblp.org}
}

\end{document}